\documentclass{elsarticle}
\usepackage[T1]{fontenc}
\usepackage[utf8]{inputenc}
\usepackage{amsmath,amssymb,amsthm}
\usepackage{mathtools}
\usepackage{graphicx,xcolor}
\usepackage{booktabs}
\usepackage{longtable}
\usepackage{url}
\usepackage{hyperref}
\usepackage{float}
\usepackage{placeins}
\newtheorem{theorem}{Theorem}[section]
\newtheorem{proposition}[theorem]{Proposition}

\newtheorem{corollary}[theorem]{Corollary}
\theoremstyle{definition}

\theoremstyle{remark}
\newtheorem{remark}[theorem]{Remark}
\usepackage{breqn}

\begin{document}

\begin{frontmatter}

\title{Nonlinear Reduced-Bias Extreme Value Index Estimators: A Comparative Study}

\author[1,3]{Lígia Henriques-Rodrigues\corref{cor1}%
}
\ead{lcarodrigues@ciencias.ulisboa.pt}

\author[2,4]{Frederico Caeiro}
\ead{fac@fct.unl.pt}

\author[3]{M. Ivette Gomes}
\ead{migomes@ciencias.ulisboa.pt}

\cortext[cor1]{Corresponding author}

\affiliation[1]{organization={Departamento de Ciências Matemáticas, Faculdade de Ciências, Universidade de Lisboa},
city={Lisboa},
country={Portugal}}
\affiliation[2]{organization={Departamento de Matemática, NOVA FCT and Universidade Nova de Lisboa, Faculdade de Ciências e Tecnologia},
city={Lisboa},
country={Portugal}}
\affiliation[3]{organization={CEAUL - Centro de Estatística e Aplicações, Faculdade de Ciências, Universidade de Lisboa},
city={Lisboa},
country={Portugal}}
\affiliation[4]{organization={NOVA MATH - Center for Mathematics and Applications, NOVA FCT, Universidade Nova de Lisboa},
city={Lisboa},
country={Portugal}}

\begin{abstract}
The estimation of the extreme value index (EVI) plays a central role in extreme value theory (EVT), particularly in the statistical modelling of heavy-tailed phenomena. 
 Classical semi-parametric estimators of the EVI are known to suffer from substantial bias, especially for moderate sample sizes and for common choices of the number of upper order statistics. In recent years, several reduced-bias (RB) methodologies have been proposed in order to improve finite-sample performance while preserving asymptotic efficiency. In this paper, we present new classes of nonlinear RB EVI-estimators based on power mean-type and Lehmer mean-type functionals of upper log-excesses. Under a unified higher-order framework, asymptotic representations are derived and bias and variance properties are compared relative to other RB EVI classes introduced in the literature. An asymptotic comparison at optimal levels is conducted to identify regions of the space parameter where the proposed estimators are more efficient.  A comprehensive Monte Carlo simulation study assesses  the finite sample behaviour of the  new estimators for several heavy tailed models. The results show that nonlinear reduced-bias estimators can provide improvements in bias and root mean squared error, although their relative performance depends on the underlying distribution and the choice of the tuning parameter. An application to hydrological data illustrates their practical use and the importance of threshold selection in EVI estimation.
\end{abstract}

\begin{keyword}
Extreme value index \sep Nonlinear Reduced-bias  estimation \sep Power mean estimators \sep Lehmer means \sep Asymptotic relative efficiency \sep Monte Carlo simulation.
\end{keyword}

\end{frontmatter}

\section{Introduction and theoretical framework}
Extreme value theory (EVT) provides a fundamental statistical framework for the modelling and analysis of rare and extreme phenomena (\citet{beirlant2004statistics,dehaan2006}). Its relevance is particularly evident in environmental applications, such as hydrology, climatology and air quality assessment, where accurate inference on tail behaviour is essential for risk evaluation and decision-making. Central to EVT is the extreme value index (EVI), which characterizes the tail heaviness of a distribution and determines the probabilistic structure of extreme events. 

In the context of heavy-tailed models, the EVI is commonly estimated using semi-parametric methods based on upper order statistics. Among these, the Hill estimators (\citet{hill1975}) 
remain widely used due to their simplicity and asymptotic properties. However, it is well known that classical estimators of the EVI may exhibit a pronounced bias in finite samples, strongly influenced by the choice of the number of upper order statistics involved in the estimation. This bias--variance trade-off represents one of the main practical challenges in extreme value inference.
To overcome these limitations, several reduced-bias {(RB)} approaches have been proposed in the literature. These methods typically rely on second- and higher-order tail approximations and aim to correct the leading bias term of classical estimators, while maintaining asymptotic normality and efficiency. Notable examples include bias-corrected versions of the Hill estimator (\citet{caeiro2005direct}), among others as well as {RB} extensions of more flexible classes such as the mean of order $p$--type  (\citet{brilhante2013asimple, caeiro2016mean.of.order.p}), the power mean--type (\citet{G+M-1999,G+M-2001,Segers2001, pm_red2024}) and Lehmer-type estimators (\citet{penalva2020couple,penalva2020lehmer}). The introduction of tuning parameters in these general classes allows for improved adaptability, but also raises important questions regarding their comparative performance and practical implementation.

Despite the growing number of {RB} estimators available, a unified comparative analysis of general classes of {RB} EVI-estimators remains limited, particularly under a common higher-order framework. Moreover, while theoretical properties are often well understood, comprehensive simulation-based comparisons and validations using real environmental data are still essential to assess finite-sample behaviour and stability. Also, according to {\citet{caeiro2011asymptotic}} it is not possible to find  a uniformly best class of {RB} EVI-estimators over all models and thresholds. Thus, practical performance depends on the chosen class and data scenario. 

The main objective of this paper is to address these issues by providing a systematic comparison of general classes of RB EVI-estimators. We consider linear and nonlinear {RB} corrections of power mean--type and Lehmer-type estimators within a unified second- and third-order tail framework, allowing for a coherent theoretical treatment.

\subsection{Background and theoretical framework}\label{sec:framework}


Let $(X_1,\dots,X_n)$ denote a random sample of independent and identically distributed random variables with cumulative distribution function $F$ and let 
\(
X_{1:n} \leq \cdots \leq X_{n:n}
\)
be the associated order statistics. {EVT} is concerned with the probabilistic behaviour of extreme order statistics, in particular the maximum $X_{n:n}$.

A cumulative distribution function $F$ is said to belong to the max-domain of attraction of a non-degenerate cumulative distribution function $G$, written $F \in \mathcal{D}(G)$, if there exist normalizing constants $a_n \in \mathbb{R}$ and $b_n > 0$ such that
\[
\lim_{n \to \infty} {\mathbb{P}}\left( \frac{X_{n:n}-a_n}{b_n} \leq x \right)
= \lim_{n \to \infty} F^n(a_n + b_n x) = G(x),
\qquad x \in \mathcal{C}(G),
\]
where $\mathcal{C}(G)$ denotes the continuity points of $G$.
The classical Fisher--Tippett--Gnedenko theorem {in  {\citet{fisher1928limiting, gnedenko1943}}} states that $G$ must belong to the generalized extreme value (GEV) family, given by
\[
G_\xi(x) =
\begin{cases}
	\exp\!\left\{-(1+\xi x)^{-1/\xi}\right\}, & 1+\xi x>0, \ \xi \neq 0,\\[1ex]
	\exp\!\left\{-\rm e^{-x}\right\}, & x \in \mathbb{R}, \ \xi = 0,
\end{cases}
\]
where $\xi \in \mathbb{R}$ is the {EVI, which} characterizes the max-domain of attraction of the underlying
distribution $F$. Specifically, $\xi>0$ corresponds to distributions in the Fr\'echet
max-domain of attraction, $\xi=0$ to the Gumbel max-domain of attraction, and $\xi<0$ to the Weibull max-domain of attraction, the latter being associated with a finite right endpoint. In this work, we restrict attention to the Fr\'echet case $\xi>0$, i.e., to heavy-tailed models, which is of particular
relevance in the modelling of environmental extremes.


Let $F^{\leftarrow}$ denote the generalized inverse of $F$, and define the associated quantile function
\[
U(t) := F^{\leftarrow}\!\left(1-\frac{1}{t}\right), \qquad t \geq 1.
\]
For heavy-tailed distributions, $F$ belongs to the Fr\'echet max-domain of attraction if and only if $U$ is regularly varying at infinity with index $\xi>0$, i.e. $U \in \mathcal{RV}_\xi$ (see, {\citet{bingham1987regular, bingham2007regular}}), that is,
\[
\lim_{t \to \infty} \frac{U(tx)}{U(t)} = x^{\xi}, \qquad x>0.
\]
This first-order condition (FOC){, in {\citet{deHaan1984}},} provides the theoretical basis for semi-parametric estimation of the EVI using upper order statistics. 


To study the asymptotic bias and convergence rates of EVI-estimators, the {FOC} is typically strengthened by a second-order condition (SOC). We assume that there exists a function $A(t)$, with constant sign and $|A(t)| \to 0$, {as $t\rightarrow \infty$,} such that
\begin{equation}\label{eq:SOC}
	\lim_{t \to \infty}
	\frac{\log U(tx) - \log U(t) - \xi \log x}{A(t)}
	= \frac{x^{\rho}-1}{\rho}, \qquad x>0,
\end{equation}
where $\rho \leq 0$ is a second-order parameter and $|A|$ is regularly varying at infinity with index $\rho$ (\citet{geluk1987regular}).  
The parameter $\rho$ governs the rate of convergence in the first-order approximation and plays a central role in the bias behaviour of semi-parametric EVI-estimators.


For a refined analysis of {RB} estimators, higher-order tail information is required. We therefore also consider a third-order condition (TOC). Assume that there exists a function $B(t)$, with $|B(t)| \to 0$, {as $t\rightarrow\infty$,} such that
\begin{equation}\label{eq:TOC}
	\lim_{t \to \infty}
	\frac{1}{B(t)}
	\left[
	\frac{\log U(tx) - \log U(t) - \xi \log x}{A(t)}
	- \frac{x^{\rho}-1}{\rho}
	\right]
	=
	\frac{x^{\rho+\rho_0}-1}{\rho+\rho_0},
\end{equation}
for all $x>0$, where $\rho_0 \leq 0$ is a third-order parameter and $|B|$ is regularly varying at infinity with index $\rho_0$.  
The TOC allows for a precise characterization of second-order bias terms and is essential in the derivation of asymptotic representations for {RB} estimators.\\


For technical convenience and practical relevance, we further restrict attention to the Hall--Welsh class of heavy-tailed models {in  {\citet{hall1982, hall1985adaptive}}}. In this case, the quantile function admits the expansion
\begin{equation}\label{eq:HallWelsh}
	U(t) = C t^{\xi}
	\left(
	1 + \frac{\xi \beta}{\rho} t^{\rho}
	+ \delta \beta t^{2\rho}
	+ o\!\left(t^{2\rho}\right)
	\right),
	\qquad t \to \infty,
\end{equation}
where $C>0$, $\delta, \beta \neq 0$.  
This class satisfies the first-, second- and third-order conditions given above, with 
\begin{equation}\label{eq:AB} 
A(t)=\xi \beta t^{\rho}, \qquad B(t)=\beta_0 t^{\rho_0},\qquad  \delta=\frac{\beta_0}{\beta}, \qquad \rho=\rho_0<0 . 
\end{equation}
Several distributions commonly used in environmental modelling, including Burr, Fr\'echet, generalized Pareto, log-logistic and Student-$t$ distributions, belong to Hall-Welsh class.\\

The framework introduced in this section will be used throughout the paper to derive asymptotic properties of classical and {RB} estimators of the {EVI}, and to provide a coherent basis for their theoretical and empirical comparison. \\

{The remainder of the paper is organized as follows. Section~\ref{sec:classes} reviews classical and general classes of EVI-estimators, while Section~\ref{sec:rb} presents {RB} methodologies and their asymptotic properties are presented in Section~\ref{sec:asymp} as well as an asymptotic comparison at optimal levels. Section~\ref{sec:mc} is devoted to the Monte Carlo simulation study. An environmental application is discussed in Section~\ref{sec:apps}, followed by  concluding remarks in Section~\ref{sec:conclusions}.}

\section{Classical and general classes of EVI-estimators}\label{sec:classes}

\subsection{Classical estimation: the Hill estimator}

For heavy-tailed models, the classical semi-parametric estimator of the EVI is the Hill estimator{, in {\citet{hill1975}}}, defined by
\begin{equation}\label{eq:Hill}
	\widehat{\xi}^{\rm H}(k)
	:= \frac{1}{k} \sum_{i=1}^k
	\log\!\left( \frac{X_{n-i+1:n}}{X_{n-k:n}} \right),
	\qquad 1 \leq k < n.
\end{equation}
Under the FOC and for intermediate sequences $k=k_n$, satisfying
\begin{equation}\label{intermediate}
k \to \infty \quad \text{and} \quad k/n \to 0, \qquad \text{as}\quad n \to \infty,
\end{equation}
the Hill estimator is consistent for $\xi$ {({\citet{mason1982}})}. When the SOC in \eqref{eq:SOC} holds, its asymptotic
distribution admits the representation
\begin{equation}\label{eq:HillAR}
	\widehat{\xi}^{\rm H}(k)
	\overset{d}{=}
	\xi
	+ \frac{\xi}{\sqrt{k}} Z_k
	+ \frac{A(n/k)}{1-\rho}\left(1+o_p(1)\right),
\end{equation}
where $Z_k$ is a standard normal random variable, frequently denoted by $\mathcal{N}(0,1)$ (see also {\citet{haeuslerteugels1985, hsing1991, dehaanresnick1998}} for details on the asymptotic behaviour).  The asymptotic representation in \eqref{eq:HillAR} highlights the presence of a non-negligible bias term, driven by the second-order function $A(n/k)$, which motivates the development of
bias-reduction techniques. 

If we assume that the TOC in \eqref{eq:TOC} holds with the parametrizations in \eqref{eq:AB}, we have 
\begin{equation}\label{eq:HillAR3ord}
	\widehat{\xi}^{\rm H}(k)
	\overset{d}{=}
	\xi
	+ \frac{\sigma_{\rm H}}{\sqrt{k}} Z_k
	+  b_{\rm {H}}(\rho) A(n/k) + b'_{\rm {H}}(\rho) A^2(n/k) \left(1+o_p(1)\right),
\end{equation}
with 
\begin{eqnarray}
	\sigma^2_{\rm H} &=& \xi^2 \label{eq:var_H}\\
	b_{\rm {H}}(\rho)&=& \frac{1}{1-\rho},  \label{eq:b_H}\\
	b'_{\rm {H}}(\rho)&=& \frac{\delta}{\xi (1-2\rho)}. \label{eq:b'_H}
\end{eqnarray}	

An appealing feature of the Hill estimator is that several asymptotically equivalent forms of the EVI-estimator can be obtained via fundamentally different methodological approaches, including maximum likelihood techniques and procedures based on the mean excess function. This convergence of distinct methods towards the same estimator highlights the natural role of the Hill estimator in extreme value analysis. Further discussion and related developments can be found in {\citet{beirlant2004statistics}} and references therein. For comprehensive reviews of estimation procedures for the EVI, see {\citet{gomes2015extreme, beirlant2016review},} and {\citet{deHaanPeng1998,cabral2022}} for comparison {among} several classic EVI-estimators. 

\subsection{Generalized means-type estimators }

{\bf Power mean--type estimators:} A flexible generalization of the Hill estimator was obtained in \citet{G+M-2001} and \citet{Segers2001} by considering power means
of the log-excesses. For $p>-1$, $p\neq 0$, {\citet{dekkers1989}} introduced the $p$-moment of the log-excesses
\begin{equation}\label{eq:Mn}
M^{(p)}_{k,n}
:= \frac{1}{k} \sum_{i=1}^k
\left(
\log X_{n-i+1:n} - \log X_{n-k:n}
\right)^p.
\end{equation}
The corresponding power mean--type estimator of the EVI is given by
\begin{equation}\label{eq:PM}
	\widehat{\xi}^{\rm PM}(p,k)
	:= \frac{\left(M^{(p)}_{k,n}\right)^{1/p}}{\Gamma(p+1)^{1/p}},
	\qquad p>-1,
\end{equation}
with $\widehat{\xi}^{\rm PM}(1,k)\equiv \widehat{\xi}^{\rm H}(k)$. {The case $p=0$ is obtained by continuous extension,
\begin{equation}
\label{eq:PM0}
\widehat{\xi}^{\rm PM}(0,k)
	:= e^{\gamma}\left(\prod_{i=1}^k
\left(
\log X_{n-i+1:n} - \log X_{n-k:n}
\right)\right)^{1/k},
\end{equation}
with $\gamma$ denoting the Euler–Mascheroni constant.} The introduction of the tuning parameter $p$ provides additional flexibility and allows for improved finite-sample behaviour when compared to the Hill estimator. The following theorem extends the asymptotic distributional representation of $\widehat{\xi}^{\rm PM}(p,k)$ established by \citet{pm_red2024}, originally derived for $p>0$, to the broader range $p>-0.5$. The asymptotic behaviour for values of $p\in(-0.5,0)$ is currently under investigation and lies beyond the scope of the present work.

\begin{theorem}
Under the validity of the TOC, in \eqref{eq:TOC}, and of \eqref{eq:AB}, for intermediate $k$ sequences, i.e.\ when \eqref{intermediate} holds, the estimator {$\widehat{\xi}^{\rm PM}(p,k)$} admits an asymptotic expansion of the form given in {\citet{pm_red2024}}, 
\begin{multline}\label{eq:PMAR}
	\widehat{\xi}^{\rm{PM}}(p,k)
	\overset{d}{=}
	\xi
	+ \frac{\sigma_{\rm PM}}{\sqrt{k}} Z_k^{\rm PM}
	+ b_{\rm {PM}}(\rho)\,A(n/k) 
	+ {O_p\!\left(\frac{A(n/k)}{ \sqrt{k}}\right)} \\
	+ b'_{\rm {PM}}(\rho) A^2(n/k)(1+o_p(1)),
\end{multline}

{for $p>-0.5$}, where $Z_k^{\rm PM}$ is a standard normal random variable,  and 
{
\begin{align}
\sigma^2_{\mathrm{PM}}
&=
\begin{cases}
\dfrac{\xi^2}{p^2}
\left(
\dfrac{\Gamma(2p+1)}{\Gamma^2(p+1)}-1
\right),
& p>-\dfrac12,\quad p\neq 0,\\[2mm]
\dfrac{\xi^2\pi^2}{6},
& p=0,
\end{cases}
\label{eq:var_PM}
\end{align}
\begin{align}
b_{\mathrm{PM}}(\rho)
&=
\begin{cases}
\dfrac{1}{p\rho}
\left(
\dfrac{1}{(1-\rho)^p}-1
\right),
& p>-\dfrac12,\quad p\neq 0,\\[2mm]
-\dfrac{\log(1-\rho)}{\rho},
& p=0,
\end{cases}
\label{eq:b_PM}
\end{align}
\begin{align}
b'_{\mathrm{PM}}(\rho)
&=
\begin{cases}
\dfrac{\mathcal{B}(p,\rho,\delta)}
      {2\xi p^2\rho^2},
& p>-\dfrac12,\quad p\neq 0,\\[3mm]
\dfrac{\mathcal{C}(\rho,\delta)}
      {2\xi\rho^2},
& p=0,
\end{cases}
\label{eq:b'_PM}
\end{align}
where
\begin{align*}
\mathcal{B}(p,\rho,\delta)
={}&
\frac{p[1-(2-\delta)\rho]}
     {(1-2\rho)^p}
-\frac{2(1-p\rho)}
      {(1-\rho)^p}
-\frac{p-1}{(1-\rho)^{2p}}
-\delta p\rho+1,
\\[2mm]
\mathcal{C}(\rho,\delta)
={}&
\log^2(1-\rho)
+2(1-\rho)\log(1-\rho)
-(1-2\rho+\delta\rho)\log(1-2\rho).
\end{align*}}
\end{theorem}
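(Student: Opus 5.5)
The plan is to reduce everything to a functional of i.i.d. standard exponentials via the Rényi representation and then expand by the delta method. Write $X_{n-i+1:n}\overset{d}{=}U(Y_{n-i+1:n})$ with $Y$ standard Pareto. Then $Y_{n-i+1:n}/Y_{n-k:n}\overset{d}{=}Y_{k-i+1:k}$, and these are independent of $Y_{n-k:n}$. Since $Y_{n-k:n}\sim n/k$ in probability, the TOC in \eqref{eq:TOC} together with the parametrization \eqref{eq:AB} gives, uniformly in $i$ (via the Drees-type uniform inequalities for $\mathcal{RV}$ functions),
\[
\log X_{n-i+1:n}-\log X_{n-k:n}\overset{d}{=}\xi E_i+A(n/k)\,\frac{e^{\rho E_i}-1}{\rho}+A(n/k)B(n/k)\,\frac{e^{2\rho E_i}-1}{2\rho}(1+o_p(1)),
\]
where the $E_i$ are i.i.d.\ standard exponentials. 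We then raise this to the power $p$, writing it as $\xi^pE_i^p\bigl(1+\epsilon_i\bigr)^p$ with $\epsilon_i=O(A)$, and expand to second order in $A$. Averaging yields
\[
M^{(p)}_{k,n}\overset{d}{=}\xi^p\Gamma(p+1)\Bigl(1+\tfrac{\sqrt{\Gamma(2p+1)/\Gamma^2(p+1)-1}}{\sqrt k}Z_k+c_1A+c_2A^2+O_p(A/\sqrt k)\Bigr).
\]
The constants $c_1,c_2$ are expectations of the form $E[E^{p}(e^{\rho E}-1)^j]$ and $E[E^{p-1}(e^{2\rho E}-1)]$. These are evaluated with $\int_0^\infty x^{a}e^{-(1-m\rho)x}dx=\Gamma(a+1)(1-m\rho)^{-a-1}$, which produces the factors $(1-\rho)^{-p}$, $(1-2\rho)^{-p}$ and $(1-\rho)^{-2p}$ appearing in $\mathcal{B}$. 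Finally we apply $x\mapsto (x/\Gamma(p+1))^{1/p}$, expanding $(1+u)^{1/p}=1+u/p+\frac{1-p}{2p^2}u^2+\dots$; this gives $\sigma_{\rm PM}$, $b_{\rm PM}$ and $b'_{\rm PM}$. Here $BA=\delta A^2/\xi$ by \eqref{eq:AB}. All of this algebra coincides with the case $p>0$ of \citet{pm_red2024}, so we only verify that every step survives for $p\in(-1/2,0]$.

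The real content of the extension is the range $-1/2<p<0$. For the CLT step we need $\mathrm{Var}(E^p)<\infty$, i.e.\ $E[E^{2p}]=\Gamma(2p+1)<\infty$, which holds exactly when $p>-1/2$. This explains the threshold and the form of \eqref{eq:var_PM}. The main obstacle is the Taylor expansion of $(\xi E_i+A\,\cdot)^p$ when $p<0$, since $E_i$ can be near zero. The correction ratio $\epsilon_i=(A/\xi)(e^{\rho E_i}-1)/(\rho E_i)+\dots$ is bounded uniformly, because $(e^{\rho x}-1)/(\rho x)\in(0,1]$. Hence the relative perturbation is $O(A)$ uniformly in $i$, and no singularity arises at small $E_i$. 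The remainder terms are then bounded by $C|A|^3E_i^p$, whose average is $O_p(A^3)$ because $E[E^p]<\infty$ for $p>-1$. The cross term between the random fluctuation and the bias, of the form $A\,k^{-1}\sum(E_i^p\phi(E_i)-E[\cdot])$, requires finite second moments of $E^p\phi(E)$ with $\phi$ bounded. This again holds for $p>-1/2$ and gives the $O_p(A/\sqrt k)$ term. One must also control the uniformity of the TOC approximation near the top order statistics, where $Y_{k-i+1:k}$ is large. For this we would use the Drees inequality with an extra factor $x^{\rho+\varepsilon}$, which is harmless because the exponential moments involved are finite for $\rho<0$.

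The case $p=0$, the estimator in \eqref{eq:PM0}, is treated separately by the same scheme applied to $\frac1k\sum\log(\log X_{n-i+1:n}-\log X_{n-k:n})$. This uses $E[\log E]=-\gamma$ and $\mathrm{Var}(\log E)=\pi^2/6$, together with derivatives in $a$ of $\Gamma(a+1)(1-m\rho)^{-a-1}$ at $a=0$, which produce the $\log(1-\rho)$ and $\log(1-2\rho)$ terms in $\mathcal{C}$. As a consistency check, we would confirm that the $p=0$ constants are the limits $p\to0$ of the general formulas for $\sigma^2_{\rm PM}$, $b_{\rm PM}$ and $b'_{\rm PM}$ by L'Hôpital's rule.
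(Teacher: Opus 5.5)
\textbf{No proof in the paper.} The paper does not prove this statement. It only points to the $p>0$ case and remarks that the range $p\in(-1/2,0)$ is ``currently under investigation'', so there is no argument to compare routes with.

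Your computational skeleton is right. Expanding $\xi^pE_i^p(1+\epsilon_i)^p$ and then applying $x\mapsto x^{1/p}$ reproduces $b_{\rm PM}$ and $\mathcal{B}(p,\rho,\delta)$ exactly. For $p=0$, the Frullani-type integrals $\mathbb{E}[(e^{\rho E}-1)/E]=-\log(1-\rho)$ and $\mathbb{E}[(e^{\rho E}-1)^2/E^2]=(1-2\rho)\log(1-2\rho)-2(1-\rho)\log(1-\rho)$ give $\mathcal{C}(\rho,\delta)$.

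One minor correction: $\int_0^\infty x^a e^{-(1-m\rho)x}\,dx=\Gamma(a+1)(1-m\rho)^{-a-1}$ requires $a>-1$. For $\mathbb{E}[E^{p-1}(e^{\rho E}-1)]$ and $\mathbb{E}[E^{p-2}(e^{\rho E}-1)^2]$ with $p<1$ the separate pieces diverge. You must integrate the combined integrand, or continue analytically in $p$ on $\mathrm{Re}\,p>-1$.

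\textbf{The gap: uniform relative control.} Your claim that the relative perturbation is $O(A)$ uniformly in $i$ is not justified. Boundedness of $(e^{\rho x}-1)/(\rho x)$ controls only the explicit correction terms. The TOC remainder, which you handle with a Drees-type inequality, is bounded only additively, by something like $\varepsilon|AB|\,x^{2\rho+\varepsilon}$ with $x=Y_{k-i+1:k}\ge 1$. This bound does not vanish as $x\downarrow 1$, i.e.\ as $E_i\downarrow 0$.

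\begin{itemize}
\item For $p>0$ this is harmless: the induced error is at most a constant times $E_i^{p-1}$ times the remainder, and $\mathbb{E}[E^{p-1}]=\Gamma(p)<\infty$.
\item For $p\le 0$ you must divide by $\xi E_i$. Since $\min_i E_i$ is the minimum of $k$ standard exponentials, it is of order $1/k$. The relative error is then only bounded by a multiple of $\varepsilon kA^2(n/k)$, which diverges when $\sqrt k A^2(n/k)\to\lambda_A\neq 0$.
\end{itemize}

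Your uniformity remark treats the opposite end (large $Y_{k-i+1:k}$), which is harmless for $\rho<0$.

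\textbf{The step fails under the stated hypotheses.} It cannot be repaired under \eqref{eq:TOC} and \eqref{eq:AB} alone. Take $\log U(t)=\log C+\xi\log t+\frac{\xi\beta}{\rho}t^{\rho}+c\,t^{2\rho}+r(t)$, with a suitable constant $c$. Let $r$ be a continuous sawtooth in $\log t$, with period and amplitude of order $t^{2\rho}/\log t$, such that on its decreasing pieces the slope of $\log U$ in $\log t$ is positive but arbitrarily small.
\begin{itemize}
\item Only $r(t)=o(t^{2\rho})$ enters the TOC, so the hypotheses hold.
\item In the regime above, $k\,t^{2\rho}/\log t\to\infty$ at $t=n/k$. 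Hence, with probability bounded away from zero, $X_{n-k:n}$ and $X_{n-k+1:n}$ lie on the same almost-flat stretch, and their log-excess can be made as small as desired.
\item One can thus make $M^{(p)}_{k,n}\to\infty$ for $p<0$, and $\frac1k\sum_i\log(\log X_{n-i+1:n}-\log X_{n-k:n})\to-\infty$ for $p=0$. This drives $\widehat\xi^{\rm PM}(p,k)\to 0$ on that event.
\end{itemize}

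\textbf{What is needed.} For $p\in(-1/2,0]$ the statement itself, and your proof, need an extra local-regularity hypothesis. Examples are a differentiable (von Mises-type) form of the second- and third-order conditions, or a Hall--Welsh remainder with derivative $o(t^{2\rho-1})$. Either gives $|R(t,x)|\le\varepsilon|A(t)B(t)|\log x$ for $x\ge 1$ and large $t$. With that bound, your relative-perturbation argument, the CLT for $E^p$ when $p>-1/2$, and the rest of your expansion go through.
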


As in the Hill case, the leading bias term of {$\widehat{\xi}^{\rm{PM}}(p,k)$} is of order $A(n/k)$ and depends explicitly on the tail behaviour of the underlying distribution.  \\

{\bf Lehmer mean--type estimators:}  Another class of general EVI-estimators based on Lehmer means, or $L_p$--type functionals of the log-excesses was introduced in {\citet{penalva2020couple, penalva2020lehmer}}. For $p>0$, 
\begin{equation}\label{eq:Lp}
	\widehat{\xi}^{\rm L}(p,k)
	:= \frac{M^{(p)}_{k,n}}{p\,M^{(p-1)}_{k,n}},
	\qquad 1 \leq k < n,
\end{equation}
with $\widehat{\xi}^{\rm L}(1,k)\equiv \widehat{\xi}^{\rm H}(k)$ and $M^{(p)}_{k,n}$ {defined} in {\eqref{eq:Mn}}.  The $L_p$ estimators retain the consistency properties of the Hill estimator under the FOC and for intermediate $k$ sequences, while offering additional robustness and flexibility through the parameter $p$. Under the TOC, in \eqref{eq:TOC},  and {the validity} of \eqref{eq:AB}, they admit an asymptotic representation
of the form ({\citet{cabral_tese}})
\begin{equation}\label{eq:LpAR}
	\widehat{\xi}^{\rm L}(p,k)
	\overset{d}{=}
	\xi
	+ \frac{\sigma_{\rm L}}{\sqrt{k}}
	 Z_k^{\rm L}
	+ b_{\rm L}(\rho)\,A(n/k)
	+ b'_{\rm L}(\rho) A^2(n/k)(1+o_p(1)),
\end{equation}
{for $p>0.5$}, where $Z_k^{\rm L}$ is a standard normal random variable, 
\begin{eqnarray}
\sigma^2_{\rm L} &=& \frac{\xi^2 \Gamma(2p-1)}{\Gamma^2(p)},\label{eq:var_L}\\
b_{\rm L}(\rho) &=& \frac{1}{(1-\rho)^p}, \label{eq:b_L}\\
b'_{\rm L}(\rho) &=& \frac{1}{\xi \rho}\left(\frac{1}{(1-2\rho)^{p-1}}-\frac{1-\rho(1-\rho)^{p-1}}{(1-\rho)^{2p-1}}+\frac{\delta \rho}{(1 -2\rho)^p}\right) \nonumber \\
&= &  \frac{1}{\xi \rho}\left(\frac{1-2\rho+\delta\rho}{(1-2\rho)^{p}}-\frac{1-\rho(1-\rho)^{p-1}}{(1-\rho)^{2p-1}}\right).\label{eq:b'_L}
\end{eqnarray}

As in the previous cases, the bias term of order $A(n/k)$ remains present and motivates the RB versions of the $L_p$ estimators.

\section{Bias reduction in semi-parametric EVI estimation}\label{sec:rb}

\subsection{Classical RB EVI-estimators}

As shown in Section~\ref{sec:classes}, classical and general semi-parametric estimators of the EVI typically exhibit a non-negligible bias driven by second-order tail behaviour. This bias is of order $A(n/k)$ and may significantly affect inference in finite samples, particularly when the choice of the number of upper order statistics $k$ is moderate; see, for instance, \citet{deHaanPeng1998,beirlant2004statistics}. RB  methodologies aim at correcting this dominant bias term by explicitly exploiting second-order tail information. Within the framework introduced in Section~\ref{sec:framework}, such corrections can be expressed in a unified manner through multiplicative adjustments involving the second-order parameters $(\rho,\beta)$. This approach underlies several RB proposals in EVT, including bias-corrected Hill-type estimators, as the one introduced in \citet{caeiro2005direct}, with the functional form,%
\begin{equation}\label{eq:CH}
\widehat\xi^{\, \rm RBH}(k) \equiv \widehat\xi^{\, \rm RBH}_{(\hat \rho, \hat \beta)}(k) := \xi^{\rm H}(k)\left(1- \frac{\hat \beta}{1-\hat \rho}(n/k)^{\hat \rho}\right),
\end{equation}
where $(\hat \rho, \hat\beta)$ are consistent estimators of $(\rho, \beta)$.
Consistent estimation of the second-order parameters at a suitable high level, $k_1$, plays a fundamental role in bias reduction, since it enables a reduction of bias without inflating the asymptotic variance, which typically constitutes the main trade-off in bias-reduction procedures. Although a wide variety of estimators for the second-order parameters $\rho$ and $\beta$ has been proposed in the literature, in this work we focus on the estimators introduced by {{\citet{fragaalves2003anewclass}}} and by 
{{\citet{gomes2002asymptotically}}}. These estimators exhibit good theoretical properties and have been successfully applied in the context of {RB} estimation. Alternative approaches for estimating second-order parameters can be found, for example, in
{\citet{ciuperca2010semi-parametric,gomes2010tailindex,worms2012estimation,
dewet2012asymptotically,deme2013ontheestimation,caeirogomes2015}}.

The class of estimators for the parameter $\rho$ proposed in
{\citet{fragaalves2003anewclass}} was originally parameterized by a tuning parameter
$\tau \geq 0$, and later extended to allow $\tau \in \mathbb{R}$; see {{\citet{caeirogomes2006}}}. The estimator is defined as
\begin{equation}\label{rho}
	\hat{\rho}(k;\tau)
	:=
	-\left|
	\frac{3\left\{T_n(k;\tau)-1\right\}}
	{T_n(k;\tau)-3}
	\right|,
\end{equation}
where $T_n(k;\tau)$ is constructed from the empirical moments
$M^{(p)}_{(k,n)}$ defined in \eqref{eq:Mn}. For $\tau=0$, we adopt the convention
$a^{b\tau} = b \log a$. The statistic $T_n(k;\tau)$ is given by
\begin{equation}\label{Tn}
	T_n(k;\tau)
	=
	\frac{
		\left(M_{k,n}^{(1)}\right)^\tau
		-
		\left(M_{k,n}^{(2)}/2\right)^{\tau/2}
	}{
		\left(M_{k,n}^{(2)}/2\right)^{\tau/2}
		-
		\left(M_{k,n}^{(3)}/6\right)^{\tau/3}
	},
	\qquad \tau \in \mathbb{R}.
\end{equation}

Theoretical considerations and simulation results, reported in
\citet{fragaalves2003anewclass,caeirogomes2015}, indicate that, in practical applications,
the choice $\tau=0$ is appropriate when $-1 \leq \rho < 0$, whereas $\tau=1$ is preferable
when $\rho < -1$.

For the estimation of the second-order parameter $\beta$, we consider the estimator
introduced by {{\citet{gomes2002asymptotically}}}, which is based on the
scaled log-spacings
\[
U_i := i\left( \log X_{n-i+1:n} - \log X_{n-i:n} \right),
\qquad i = 1,\ldots,k < n.
\]

Given a consistent estimator $\hat{\rho}$ of $\rho$, the corresponding estimator of
$\beta$, denoted by $\hat{\beta}(k;\hat{\rho})$, is defined for $\rho<0$
\begin{equation}\label{beta}
	\hat{\beta}(k;{\hat\rho})
	=
	\frac{
		\left(\frac{k}{n}\right)^{\hat\rho}
		\left[
		\left(\frac{1}{k}\sum_{i=1}^k\left(\frac{i}{k}\right)^{-{\hat\rho}}\right)
		\left(\frac{1}{k}\sum_{i=1}^k U_i\right)
		-
		\left(\frac{1}{k}\sum_{i=1}^k\left(\frac{i}{k}\right)^{-{\hat\rho}} U_i\right)
		\right]
	}{
		\left(\frac{1}{k}\sum_{i=1}^k\left(\frac{i}{k}\right)^{-{\hat\rho}}\right)
		\left(\frac{1}{k}\sum_{i=1}^k\left(\frac{i}{k}\right)^{-{\hat\rho}} U_i\right)
		-
		\left(\frac{1}{k}\sum_{i=1}^k\left(\frac{i}{k}\right)^{-2{\hat\rho}} U_i\right)
	}.
\end{equation}

Throughout the paper, the pair $(\hat{\rho},\hat{\beta})$ denotes the estimators defined in
\eqref{rho} and \eqref{beta}, evaluated at a high threshold level $k_1$, that is,
\((\hat{\rho},\hat{\beta})
\equiv
(\hat{\rho}(k_1),\hat{\beta}(k_1)),\)
and such that
\[
(\hat{\rho}-\rho)\log n = o_p(1), \qquad
\hat{\beta} = \beta(1+o_p(1)).
\]
These estimators are subsequently used to derive the asymptotic behaviour of all classes of RB estimators under study.

\subsection{Nonlinear RB power mean  and $L_p$ estimators}

In the spirit of \citet{caeiro2005direct}, new classes of RB power-mean and $L_p$ estimators were introduced in \citet{pm_red2024} and {\citet{penalva2020couple}}, respectively, 
\begin{multline}\label{eq:RBPM}
	{\widehat{\xi}^{\, \rm RBPM}(p,k)\equiv 	\widehat{\xi}^{\, \rm RBPM}_{(\hat\rho, \hat\beta)}(p,k)}
	:=
	\widehat{\xi}^{\, \rm PM}(p,k)
	\left(
	1 -
	\frac{1-(1-\widehat{\rho})^{p}}
	{p\,\widehat{\rho}\,(1-\widehat{\rho})^{p}}
	\,\widehat{\beta}
	\left(\frac{n}{k}\right)^{\widehat{\rho}}
	\right),\\
	\quad p>-\dfrac12\,,  p \neq 0,
\end{multline}
and 
\begin{equation}\label{eq:RBL}
	\widehat{\xi}^{\, \rm RBL}(p,k)\equiv 	\widehat{\xi}^{\, \rm RBL}_{(\hat\rho, \hat\beta)}(p,k)
	:=
	\widehat{\xi}^{\, \rm L}(p,k)
	\left(
	1 -
	\frac{\widehat{\beta}}
	{(1-\widehat{\rho})^{p}}
	\left(\frac{n}{k}\right)^{\widehat{\rho}}
	\right),
	\quad p>\dfrac{1}{2},
\end{equation}
where $(\hat{\beta},\hat{\rho})$ are consistent estimators of the second-order parameters $(\beta,\rho)$ and $ 	\widehat{\xi}^{\, \rm RBPM}(1,k)\equiv 	\widehat{\xi}^{\, \rm RBL}(1,k)=\widehat{\xi}^{\rm RBH}(k)$, in \eqref{eq:CH}.

{To account for the particular behaviour of the power mean class at $p=0$, we also introduce the following RB power mean estimator:

\begin{multline}\label{eq:RBPM0}
	{\widehat{\xi}^{\, \rm RBPM}(0,k)\equiv 	\widehat{\xi}^{\, \rm RBPM}_{(\hat\rho, \hat\beta)}(0,k)} \!
	:= 
	\widehat{\xi}^{\, \rm PM}(0,k)
	\left(
	1 +\frac{\log(1-\widehat \rho)}{\widehat{\rho}}\,\widehat{\beta}
	\left(\frac{n}{k}\right)^{\widehat{\rho}}
	\right),
	\quad 
\end{multline}

a continuation, for $p=0$, of the estimator in \eqref{eq:RBPM}, since $\lim_{p\rightarrow 0} (1-x^p)/p= -\log x$.
{Although the estimators in \eqref{eq:RBPM}, \eqref{eq:RBL} {and \eqref{eq:RBPM0}} eliminate the leading bias term of order $A(n/k)$, higher-order biases are usually still present and may influence the finite-sample behaviour of the estimators. Correcting higher-order bias terms is possible, but difficult because each additional correction requires the estimation of additional tail parameters. Motivated by the need to modify higher-order bias terms without estimating additional parameters, \citet{caeiro2005direct} and \citet{cabral2016reduced} suggested replacing {$1-b_{\text{H}}(\hat \rho)\hat\beta(n/k)^{\hat\rho}$ by $\exp\{-b_{\text{H}}(\hat \rho)\hat\beta(n/k)^{\hat\rho}\}$ or by $2-\exp\{b_{\text{H}}(\hat \rho)\hat\beta(n/k)^{\hat\rho}\}$.} Following this same idea, we next propose nonlinear extensions of the RB power-mean and $L_p$ classes.

Asymptotically equivalent exponential versions of \eqref{eq:RBPM} {and \eqref{eq:RBPM0} are} defined, by
\begin{multline}\label{eq:RBPMexp}
	\widehat{\xi}^{\, \rm RBPM*}(p,k) \equiv 	\widehat{\xi}^{\, \rm RBPM*}_{(\hat\rho, \hat\beta)}(p,k) \!
	:= 
	\widehat{\xi}^{\, \rm PM}(p,k)
	\exp\!\left\{
	-
	\frac{1-(1-\widehat{\rho})^{p}}
	{p\,\hat{\rho}\,(1-\hat{\rho})^{p}}
	\,\widehat{\beta}
	\left(\frac{n}{k}\right)^{\hat{\rho}}
	\right\}, \quad 
\end{multline}
for $p>-1/2$, where the case $p=0$ follows by continuous extension, since $\lim_{p\rightarrow 0} (1-x^p)/p= -\log x$.

Going one step further, we introduce the nonlinear {RB} power mean estimators
\begin{equation}\label{eq:NRBPM}
\begin{aligned}
\widehat{\xi}^{\,\mathrm{NRBPM}}(p,k)
&\equiv
\widehat{\xi}^{\,\mathrm{NRBPM}}_{(\hat\rho,\hat\beta)}(p,k)
\\
&:=
\widehat{\xi}^{\,\mathrm{PM}}(p,k)
\left(
2-
\exp\!\left\{
\frac{1-(1-\hat{\rho})^{p}}
{p\,\hat{\rho}\,(1-\hat{\rho})^{p}}
\,\hat{\beta}
\left(\frac{n}{k}\right)^{\hat{\rho}}
\right\}
\right),
\end{aligned}
\end{equation}
for $p>-1/2$, under the same continuous extension convention at $p=0$.

As in the power mean case, nonlinear corrections can be introduced to further refine bias
properties. An exponential version of \eqref{eq:RBL} is defined by
\begin{equation}\label{eq:RBLexp}
	\widehat{\xi}^{\rm RBL*}(p,k) \equiv 	\widehat{\xi}^{\, \rm RBL*}_{(\hat\rho, \hat\beta)}(p,k) \!
	:=
	\widehat{\xi}^{\, \rm L}(p,k)
	\exp\!\left\{
	-
	\frac{\hat{\beta}}
	{(1-\hat{\rho})^{p}}
	\left(\frac{n}{k}\right)^{\hat{\rho}}
	\right\}. 
\end{equation}

We further introduce the nonlinear {RB} $L_p$ estimator
\begin{equation}\label{eq:NRBL}
\begin{aligned}
\widehat{\xi}^{\,\mathrm{NRBL}}(p,k)
&\equiv
\widehat{\xi}^{\,\mathrm{NRBL}}_{(\hat\rho,\hat\beta)}(p,k)
\\
&:=
\widehat{\xi}^{\,\mathrm{L}}(p,k)
\left[
2-
\exp\!\left\{
\frac{\hat{\beta}}
{(1-\hat{\rho})^{p}}
\left(\frac{n}{k}\right)^{\hat{\rho}}
\right\}
\right],
\qquad p>\frac12.
\end{aligned}
\end{equation}

The estimators \eqref{eq:NRBPM} and \eqref{eq:NRBL} constitute the main methodological contributions of this paper. Their asymptotic behaviour, as well as their relationship with the corresponding {RB} estimators, are formally established in
Section~\ref{sec:asymp}.
\medskip 
\begin{remark}
For $p=1$, the EVI-estimators \(\widehat{\xi}^{\, \rm NRBPM}(1,k)\)  and \(\widehat{\xi}^{\, \rm NRBL}(1,k)\), in \eqref{eq:NRBPM} and \eqref{eq:NRBL}, respectively,  are also generalizations of the EVI-estimator,
\begin{equation}\label{eq:NCH}
\widehat{\xi}^{\, \rm NRBH}(k) \equiv 	\widehat{\xi}^{\, \rm NRBH}_{(\hat\rho, \hat\beta)}(k) \!
	:=
	\widehat{\xi}^{\, \rm RBH}(k)
	\left(
	2 -
	\exp\!\left\{
	\frac{\hat{\beta}}
	{1-\hat{\rho}}
	\left(\frac{n}{k}\right)^{\hat{\rho}}
	\right\}
	\right),
\end{equation}
introduced in \citet{cabral2016reduced}, with \(\widehat{\xi}^{\, \rm RBH}(k)\) the EVI-estimator defined  in \eqref{eq:CH}.
\end{remark}

\section{Asymptotic results}\label{sec:asymp}

This section presents the asymptotic behaviour of the RB estimators introduced in
Section~\ref{sec:rb}.  We briefly recall the asymptotic behaviour of the RB power mean, $L_p$ and Hill estimators, which will serve as a benchmark for the nonlinear corrections considered below. 
{Although the asymptotic behaviour of  the aforementioned estimators has been established in the literature (\citet{pm_red2024, penalva2016extreme, caeiro2005direct}), the results are presented here within a unified framework, extending the asymptotic distributional representation of $\widehat{\xi}^{\rm PM}(p,k)$ from $p>0$ to $p>-1/2$.}

The asymptotic results presented in Theorems \ref{theo:RB_unified} and \ref{theo:NRB_known} can be applied to any EVI estimator, $\tilde{\xi}(k)$, admitting a third-order asymptotic representation within the Hall class, of the type 
\begin{equation}\label{eq:EVI_dist}
\tilde{\xi}(k)\overset{d}{=}\xi+\frac{\tilde \sigma}{\sqrt{k}}\,\tilde Z_k + \tilde b_1\,A^2(n/k)+ \tilde b_2\,A^2(n/k)\bigl(1+o_p(1)\bigr),
\end{equation}
where $\tilde \sigma$, $\tilde b_1$, $\tilde b_2$ are estimating specific constants and $A(t)=\xi\beta\, t^\rho$, providing general expressions for the higher-order bias coefficients of both linear and nonlinear RB corrections.

\begin{theorem}[{\citet{pm_red2024, penalva2016extreme, caeiro2005direct}}]\label{theo:RB_unified}
	Assume the validity of the TOC, in \eqref{eq:TOC}, and of \eqref{eq:AB}, consider intermediate sequences $k=k_n$ satisfying \eqref{intermediate}, $1\le k<n$, and work in $\mathcal{S_{\bullet}}$, $\bullet=\{\rm{{H}, PM, L}\}$, where {$\mathcal{S_{\rm{H}}}=\{\xi: \xi>0\}$}, $\mathcal{S_{\rm{PM}}}=\{(\xi,p): \xi>0, p>-1/2\}$,  $\mathcal{S_{\rm{L}}}=\{(\xi,p): \xi>0, p>1/2\}$: 
	
	\medskip
	
	\noindent
	\textbf{(i) Known $(\rho,\beta)$.}  
	If the second-order parameters $(\rho,\beta)$ are known, the RB estimators in \eqref{eq:CH}, \eqref{eq:RBPM} and \eqref{eq:RBL}, with $(\hat\rho, \hat\beta)$ replaced by $(\rho, \beta)$,
	admit the following distributional representation:
	\begin{equation}\label{eq:RB_dist}
		\widehat\xi^{\, {\rm RB\bullet}}_{(\rho,\beta)}(p, k)
		\overset{d}{=}\,
		\xi+\frac{\sigma_{\bullet}}{\sqrt{k}}Z_k^{\bullet}
		+b_{\rm {RB}\bullet}A^2(n/k)(1+o_p(1)),
	\end{equation}
	with {$\sigma^2_{\bullet}$}  given in \eqref{eq:var_H}, \eqref{eq:var_PM} and \eqref{eq:var_L}, respectively,  and $Z_k^\bullet$ the standard normal random variables in \eqref{eq:HillAR}, \eqref{eq:PMAR} and \eqref{eq:LpAR}, $$b_{\rm RB\bullet}\equiv b_{\rm RB\bullet}(\rho):= b'_{\bullet}(\rho) - \frac{b^2_\bullet (\rho)}{\xi},$$ with $(b_\bullet(\rho), b'_\bullet(\rho))$ given in \eqref{eq:b_H} and \eqref{eq:b'_H} for the Hill estimator in \eqref{eq:Hill}, \eqref{eq:b_PM} and \eqref{eq:b'_PM} for the power mean-type estimator in \eqref{eq:RBPM}, and in \eqref{eq:b_L} and \eqref{eq:b'_L}, for the $L_p$ type estimator in \eqref{eq:RBL}.
Then:
	\begin{itemize}
		\item if $\sqrt{k}A(n/k)\to\lambda$ (finite or infinite) and $\sqrt{k}A^2(n/k)\to 0$,  
		\[
		\sqrt{k}\big(\widehat\xi^{\, {\rm RB\bullet}}_{(\rho,\beta)}(p, k)-\xi\big)
		\overset{d}{\longrightarrow}\mathcal{N}(0,\sigma^2_\bullet);
		\]
		\item if $\sqrt{k}A(n/k)\to\infty$ and $\sqrt{k}A^2(n/k)\to\lambda_A$,  
		\[
		\sqrt{k}\big(\widehat\xi^{\, {\rm RB\bullet}}_{(\rho,\beta)}(p, k)-\xi\big)
		\overset{d}{\longrightarrow}\mathcal{N}(\lambda_A b_{\rm RB\bullet},\sigma^2_\bullet).
		\]
	\end{itemize}
	
	\medskip
	
	\noindent
	\textbf{(ii) Estimated $(\rho,\beta)$.}  
	Assume now that $(\rho,\beta)$ are replaced by estimators $(\hat\rho,\hat\beta)\equiv(\hat\rho(k_1),\hat\beta(k_1))$, where $k_1$ is such that $\sqrt{k_1}A(n/k_1)\to\infty$, $k=o(k_1)$ and $(\hat\rho-\rho)\log n=o_p(1)$.  Then the estimator $\widehat\xi^{\,{\rm RB\bullet}}(p, k)$ is consistent for $\xi$.  
	If $\sqrt{k}A(n/k)\to\lambda$ finite (possibly zero), then
	\[
	\sqrt{k}\big(\widehat\xi^{\,{\rm RB\bullet}}(p, k)-\xi\big)
	\overset{d}{\longrightarrow}\mathcal{N}(0,\sigma^2_\bullet).
	\]
	This result remains valid when $\sqrt{k}A(n/k)\to\infty$ and $\sqrt{k}A^2(n/k)\to\lambda_A$, provided that
	\[
	(\hat\rho-\rho)\log (k/k_1)=o_p\!\left(\frac{1}{\sqrt{k}A(n/k)}\right).
	\]
\end{theorem}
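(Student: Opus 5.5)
The plan is to write every RB estimator in the unified form $\widehat\xi^{\,\rm RB\bullet}=\widehat\xi^{\,\bullet}(p,k)\,(1-b_\bullet(\hat\rho)\hat\beta(n/k)^{\hat\rho})$. This holds because the three correction factors in \eqref{eq:CH}, \eqref{eq:RBPM} and \eqref{eq:RBL} are exactly $b_\bullet(\hat\rho)\hat\beta(n/k)^{\hat\rho}$, with $b_\bullet$ as in \eqref{eq:b_H}, \eqref{eq:b_PM} and \eqref{eq:b_L}. The case $p=0$ in \eqref{eq:RBPM0} also fits, since $-\log(1-\rho)/\rho=b_{\rm PM}(\rho)$ at $p=0$. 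With the parametrization \eqref{eq:AB}, $\beta(n/k)^\rho=A(n/k)/\xi$. So for known $(\rho,\beta)$ the correction factor is the deterministic quantity $1-b_\bullet(\rho)A(n/k)/\xi$.

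For part (i), I would substitute the third-order representations \eqref{eq:HillAR3ord}, \eqref{eq:PMAR} and \eqref{eq:LpAR}. The PM representation is taken from the preceding Theorem, which covers $p>-1/2$. Multiplying out gives
\[
\Bigl(\xi+\tfrac{\sigma_\bullet}{\sqrt k}Z_k^\bullet+b_\bullet A+b'_\bullet A^2(1+o_p(1))\Bigr)\Bigl(1-\tfrac{b_\bullet}{\xi}A\Bigr),
\]
with $A\equiv A(n/k)$. The term $\xi\cdot b_\bullet A/\xi$ cancels the first-order bias $b_\bullet A$. The surviving $A^2$ terms are $b'_\bullet A^2-b_\bullet^2A^2/\xi$, which is $b_{\rm RB\bullet}A^2$. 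The cross term $-\sigma_\bullet b_\bullet Z_k^\bullet A/(\xi\sqrt k)$ is $O_p(A/\sqrt k)$, like the extra term in \eqref{eq:PMAR}. Both are $o_p(1/\sqrt k)$ because $A\to0$, so they are negligible at the $1/\sqrt k$ scale. This yields \eqref{eq:RB_dist}. The two limit statements then follow by multiplying by $\sqrt k$: the bias term is $\sqrt k A^2 b_{\rm RB\bullet}(1+o_p(1))$, which tends to $0$ or to $\lambda_A b_{\rm RB\bullet}$.

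For part (ii), I would write
\[
\hat\beta(n/k)^{\hat\rho}-\beta(n/k)^{\rho}=\beta(n/k)^\rho\bigl[(\hat\beta/\beta)e^{(\hat\rho-\rho)\log(n/k)}-1\bigr].
\]
I would also expand $b_\bullet(\hat\rho)=b_\bullet(\rho)+O_p(\hat\rho-\rho)$, using that $b_\bullet$ is smooth in $\rho<0$. Since $(\hat\rho-\rho)\log n=o_p(1)$ and $\hat\beta/\beta\to1$, the perturbation is $o_p(A(n/k))$, and consistency follows at once. For the normal limit with $\sqrt kA\to\lambda$ finite, the perturbation multiplied by $\sqrt k$ is $o_p(\lambda)=o_p(1)$. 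For the regime $\sqrt kA\to\infty$, a finer expansion is needed. I would use the standard fact, from Gomes--Martins and Caeiro--Gomes--Pestana, that $\hat\beta$ computed at $k_1$ satisfies
\[
\hat\beta(k_1)(n/k_1)^{\hat\rho}-\beta(n/k_1)^\rho=o_p\bigl(A(n/k_1)\bigr)\ \text{up to the same $\hat\rho$ error}.
\]
This reduces the perturbation to
\[
\beta(n/k)^\rho\bigl[(\hat\rho-\rho)\log(k/k_1)(1+o_p(1))+O_p(\hat\rho-\rho)\bigr].
\]
Under the stated condition, multiplying by $\sqrt k$ gives $o_p(1)$. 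This is because $\sqrt k A\,(\hat\rho-\rho)\log(k/k_1)=o_p(1)$ by assumption, and the $O_p(\hat\rho-\rho)$ term is dominated since $\log(k_1/k)\to\infty$ when $k=o(k_1)$.

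The main obstacle is this last step: controlling how the error of $\hat\beta$ at level $k_1$ propagates through $(n/k)^{\hat\rho}$ rather than $(n/k_1)^{\hat\rho}$. That requires the specific construction \eqref{beta} of $\hat\beta$ and its known asymptotics, and I would cite rather than rederive them. A second point needing care is the PM range $p\in(-1/2,0]$. There, smoothness of $b_{\rm PM}$ and the validity of \eqref{eq:PMAR} rest on the preceding Theorem, and I would invoke it as given.
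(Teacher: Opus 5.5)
Your proposal is correct and follows essentially the route the paper relies on. The paper does not reprove this cited theorem, but its proofs of Theorem~\ref{theo:NRB_known} and of the subsequent Proposition use exactly your two steps: multiplying the third-order representation by the deterministic factor $1-b_\bullet(\rho)A(n/k)/\xi$, and then a delta-method perturbation of $b_\bullet(\hat\rho)\hat\beta(n/k)^{\hat\rho}$ driven by the known asymptotics of $\hat\beta$.

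The one point to tighten is your ``standard fact'' in (ii). A residual that is merely $o_p(A(n/k_1))$ at level $k_1$ becomes only $o_p(A(n/k))$ at level $k$, and this is not killed by $\sqrt{k}$ when $\sqrt{k}A(n/k)\to\infty$. You should therefore invoke the fact in the form the paper quotes, $(\hat\beta-\beta)/\beta\stackrel{p}{\sim}-(\hat\rho-\rho)\log(n/k_1)$. Equivalently, the part of the error of $\hat\beta$ not induced by $\hat\rho$ must be negligible against the $\hat\rho$-induced part. With that form, the estimator shifts by $-b_\bullet(\rho)A(n/k)(\hat\rho-\rho)\log(k_1/k)(1+o_p(1))$, and the stated condition on $\hat\rho$ finishes the argument. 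Note that the logarithm is $\log(k_1/k)$, not $\log(k/k_1)$ as you wrote, although the sign is immaterial inside the $o_p$ condition.
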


\medskip 

\begin{remark} 
The nonlinear RB estimators in \eqref{eq:RBPMexp} and \eqref{eq:RBLexp} are asymptotically equivalent to the RB estimators in \eqref{eq:RBPM} and \eqref{eq:RBL}, respectively, under the validation of the SOC, in \eqref{eq:SOC}, and therefore the asymptotic behaviour  presented in {\bf Theorem \ref{theo:RB_unified}} holds for the RB estimators in \eqref{eq:RBPMexp} and \eqref{eq:RBLexp}, respectively. 
	\end{remark}

\medskip 

We now present the asymptotic distributional representation of  the nonlinear
RB estimators introduced in \eqref{eq:NRBPM} and \eqref{eq:NRBL}.

\begin{theorem}\label{theo:NRB_known}
Assume the conditions of {\bf Theorem~\ref{theo:RB_unified}} hold.  The nonlinear {RB}
power mean estimators $\widehat{\xi}^{\rm NRB\bullet}(p,k)$, in \eqref{eq:NRBPM} and \eqref{eq:NRBL}, with $\bullet=\{\rm PM, \rm L\,\}$, have the following asymptotic distributional  representation, assuming that $(\beta, \rho)$ are known:

	\begin{equation}\label{eq:NRB_dist}
	\widehat\xi^{\, {\rm NRB\bullet}}_{(\rho,\beta)}(p, k)
	\overset{d}{=}\,
	\xi+\frac{\sigma_{\bullet}}{\sqrt{k}}Z_k^{\bullet}
	+b_{\rm {NRB}\bullet}¢A^2(n/k)(1+o_p(1)),
\end{equation}
with $\sigma^2_{\bullet}$ given in \eqref{eq:var_PM} and \eqref{eq:var_L}, respectively,  and $Z_k^\bullet$  the  standard normal random variables in \eqref{eq:PMAR} and \eqref{eq:LpAR}, $$b_{\rm NRB\bullet}\equiv b_{\rm NRB\bullet}(\rho):= b'_{\bullet}(\rho) - \frac{3\, b^2_\bullet(\rho)}{2\xi},$$ with $(b_\bullet(\rho), b'_\bullet(\rho))$ given in \eqref{eq:b_PM} and \eqref{eq:b'_PM} for the power mean-type estimator in \eqref{eq:RBPM}, and in \eqref{eq:b_L} and \eqref{eq:b'_L}, for the $L_p$ type estimator in \eqref{eq:RBL}.
\end{theorem}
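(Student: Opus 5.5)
The argument is a direct second-order Taylor expansion of the nonlinear correction factor, combined with the third-order representations \eqref{eq:PMAR} and \eqref{eq:LpAR}. With $(\rho,\beta)$ known, write
\[
c_\bullet(\rho)\,\beta\,(n/k)^{\rho}=\frac{b_\bullet(\rho)}{\xi}\,A(n/k),
\]
using $A(t)=\xi\beta t^\rho$ from \eqref{eq:AB}. For the power mean case, $c_{\rm PM}(\rho)=\{1-(1-\rho)^p\}/\{p\rho(1-\rho)^p\}$, which equals $b_{\rm PM}(\rho)$ in \eqref{eq:b_PM}; the case $p=0$ follows by the continuous extension. For the $L_p$ case, $c_{\rm L}(\rho)=(1-\rho)^{-p}=b_{\rm L}(\rho)$. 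Setting $u:=b_\bullet(\rho)A(n/k)/\xi\to 0$, the factor in \eqref{eq:NRBPM} and \eqref{eq:NRBL} expands as
\[
2-e^{u}=1-u-\tfrac{u^2}{2}+O(u^3).
\]
It is deterministic, so no randomness enters from the correction when $(\rho,\beta)$ are known.

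Next I multiply this expansion by the representation of $\widehat\xi^{\bullet}(p,k)$:
\[
\xi+\frac{\sigma_\bullet}{\sqrt k}Z_k^\bullet+b_\bullet A+b'_\bullet A^2(1+o_p(1)),
\]
with $A\equiv A(n/k)$. I collect the terms in three groups.
\begin{itemize}
\item Order $A$: $b_\bullet A-\xi\cdot b_\bullet A/\xi=0$. This is the first-order cancellation already exploited in Theorem~\ref{theo:RB_unified}.
\item Order $A^2$: there are three contributions, $b'_\bullet A^2$, $-b_\bullet A\cdot b_\bullet A/\xi=-b_\bullet^2A^2/\xi$, and $-\xi\cdot\tfrac12 b_\bullet^2A^2/\xi^2=-b_\bullet^2A^2/(2\xi)$. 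Their sum is $\bigl(b'_\bullet-\tfrac{3b_\bullet^2}{2\xi}\bigr)A^2=b_{\rm NRB\bullet}A^2$, as claimed.
\item Cross terms: the product $\frac{\sigma_\bullet}{\sqrt k}Z_k^\bullet\cdot O(A)$, together with the $O_p(A/\sqrt k)$ term in \eqref{eq:PMAR}, is $O_p(A/\sqrt k)$. Terms of order $A^3$ are $o(A^2)$.
\end{itemize}

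Comparing with the linear case is a useful check. There the factor is $1-u$, so the $u^2/2$ contribution is absent and one recovers $b_{\rm RB\bullet}=b'_\bullet-b_\bullet^2/\xi$. This confirms that the nonlinear factor contributes exactly the extra $-b_\bullet^2/(2\xi)$.

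The main obstacle is justifying that the $O_p(A/\sqrt k)$ remainder can be absorbed into the displayed form $\frac{\sigma_\bullet}{\sqrt k}Z_k^\bullet+b_{\rm NRB\bullet}A^2(1+o_p(1))$. I would handle this exactly as in the proof of Theorem~\ref{theo:RB_unified}(i): either absorb it into the normal term, since $A/\sqrt k=o_p(1/\sqrt k)$, or note that in the relevant regime $\sqrt kA\to\infty$ one has $A/\sqrt k=o(A^2)$. In both cases the distributional representation \eqref{eq:NRB_dist} holds, and the limiting normal statements follow as in Theorem~\ref{theo:RB_unified}(i). Care is needed for the $L_p$ class, where \eqref{eq:LpAR} has no explicit cross term, and at $p=0$, where one should verify that the limits $b_{\rm PM}$ and $b'_{\rm PM}$ are used consistently with the continuous extension of the correction factor.
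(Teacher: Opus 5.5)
Your proposal is correct and follows the paper's proof essentially step for step: you rewrite the factor as $2-\exp\{b_\bullet(\rho)A(n/k)/\xi\}$, expand it to second order, multiply by the representations \eqref{eq:PMAR} and \eqref{eq:LpAR}, and collect the three $A^2(n/k)$ contributions to obtain $b'_\bullet(\rho)-3b_\bullet^2(\rho)/(2\xi)$. You also treat the $O_p(A(n/k)/\sqrt{k})$ cross terms explicitly, absorbing them because $A(n/k)\to 0$, and you correctly keep $\sigma_\bullet$ in front of $Z_k^\bullet$; the paper leaves the cross terms implicit.
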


\begin{proof}
First, note that with $A(n/k) = \xi\,\beta(n/k)^{\rho}$, the estimators in \eqref{eq:NRBPM} and  \eqref{eq:NRBL} can be written as 
\begin{multline*}
 \widehat{\xi}^{\, \rm NRB\bullet}_{(\rho,\beta)}(p,k) \
:=
\widehat{\xi}^{\, \bullet}(p,k)
\left(
2 -
\exp\!\left\{
b_{\bullet}(\rho)
\,{\beta}
\left(\frac{n}{k}\right)^{\rho}
\right\}
\right) \\
= \widehat{\xi}^{\, \bullet}(p,k)
\left(
2 -
\exp\!\left\{
\frac{b_{\bullet}(\rho)}{\xi}A(n/k)
\right\}
\right).
\end{multline*}

Using the Taylor expansion, 
\(\exp(x) = 1 + x + \frac{x^2}{2} + o(x^2)\), when \(x \to 0\),
we obtain the following expansion for the bias term, 
\begin{align*}
2 - \exp\!\left(\frac{b_\bullet(\rho)}{\xi}A(n/k)\right)
&= 2 - \left(1 + \frac{b_\bullet(\rho)}{\xi}A(n/k) + \frac{b^2_\bullet(\rho)}{2\xi^2}A^2(n/k)\right)(1+o(1)) \\
&= 1 - \frac{b_\bullet(\rho)}{\xi}A(n/k) - \frac{b^2_\bullet(\rho)}{2\xi^2}A^2(n/k)(1+o(1)).
\end{align*}
		
Next, we combine this expansion with the third--order asymptotic distributions of the PM and $L_p$ estimators in \eqref{eq:PMAR} and \eqref{eq:LpAR}, respectively. With $\bullet=\{\rm PM, \rm L\}$, this gives
\begin{align*}
\hat{\xi}^{\,\mathrm{NRB\bullet}}_{(\rho,\beta)}(k)
\stackrel{d}{=}\;\xi
&+ \frac{\xi}{\sqrt{k}} Z^\bullet_k+ b_\bullet(\rho) A(n/k)+ b'_\bullet A^2(n/k)(1+o_p(1)) \\
&\quad \times \left[
1 - \frac{b_\bullet(\rho)}{\xi}A(n/k) - \frac{b^2_\bullet(\rho)}{2\xi^2}A^2(n/k)(1+o_p(1))\right].
\end{align*}
		
After simplification, we obtain
\[\hat{\xi}^{\,\mathrm{NRB\bullet}}_{(\rho,\beta)}(k)\stackrel{d}{=}\xi+ \frac{\xi}{\sqrt{k}} Z^\bullet_k +\left(b'_\bullet (\rho)-\frac{3\, b^2_\bullet(\rho)}{2\xi}\right)A^2(n/k)(1+o_p(1)).\]
\end{proof}

\medskip 
\begin{remark}
The asymptotic behaviour stated in {\bf Theorem \ref{theo:NRB_known}} remains valid for the estimator $\widehat{\xi}^{\rm NRBH}(k)$ defined in \eqref{eq:NCH}. 
\end{remark}
\medskip 

In the next Corollary we state the asymptotic equivalence between the classes of RB estimators RB$\bullet$ and NRB$\bullet$, {which} is also true for the classes RBH and NRBH.

\begin{corollary}\label{cor:NRB_RB}
Assume that the conditions of {\bf Theorem~\ref{theo:NRB_known}} hold. The nonlinear {RB} estimators $\widehat{\xi}^{\rm NRB\bullet}_{(\rho,\beta)}(p,k)$, $\bullet=\{\rm PM, \rm L\}$, defined in \eqref{eq:NRBPM} and \eqref{eq:NRBL} satisfy
\[\widehat{\xi}^{\rm NRB\bullet}_{(\rho,\beta)}(p,k) - \widehat{\xi}^{\rm RB\bullet}_{(\rho,\beta)}(p,k)\stackrel{d}{=}
- \frac{b^2_\bullet(\rho)}{2\xi}\,A^2(n/k)\left(1+o_p(1)\right).\]
Consequently, $\widehat{\xi}^{\rm NRB\bullet}_{(\rho,\beta)}(p,k)$ and $\widehat{\xi}^{\rm RB\bullet}_{(\rho,\beta)}(p,k)$ share the same asymptotic distribution, while differing in their second-order bias terms.
\end{corollary}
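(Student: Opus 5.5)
The plan is to compare the two estimators directly through their correction factors rather than through their separate asymptotic representations. Both share the factor $\widehat{\xi}^{\,\bullet}(p,k)$, so the difference factorizes. With $(\rho,\beta)$ known and $A(n/k)=\xi\beta(n/k)^\rho$, the linear factor in \eqref{eq:RBPM}--\eqref{eq:RBL} is $1-\frac{b_\bullet(\rho)}{\xi}A(n/k)$. For the power mean case this uses the identity $\frac{1-(1-\rho)^p}{p\rho(1-\rho)^p}=b_{\rm PM}(\rho)$ from \eqref{eq:b_PM}, and the $p=0$ case uses \eqref{eq:RBPM0}. The nonlinear factor is $2-\exp\{\frac{b_\bullet(\rho)}{\xi}A(n/k)\}$, exactly as written at the start of the proof of Theorem~\ref{theo:NRB_known}. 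Hence
\[
\widehat{\xi}^{\rm NRB\bullet}_{(\rho,\beta)}(p,k)-\widehat{\xi}^{\rm RB\bullet}_{(\rho,\beta)}(p,k)
=\widehat{\xi}^{\,\bullet}(p,k)\left(1+x-e^{x}\right),\qquad x:=\frac{b_\bullet(\rho)}{\xi}A(n/k).
\]

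Next I expand the deterministic factor. Since $A(n/k)\to0$ along intermediate sequences \eqref{intermediate}, Taylor's theorem gives
\[
1+x-e^{x}=-\frac{x^2}{2}(1+o(1))=-\frac{b_\bullet^2(\rho)}{2\xi^2}A^2(n/k)(1+o(1)).
\]
This step is exact and non-random, so no $o_p$ manipulation is needed here.

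Then I multiply by the random factor. By \eqref{eq:PMAR} and \eqref{eq:LpAR}, $\widehat{\xi}^{\,\bullet}(p,k)=\xi+O_p(1/\sqrt{k})+O(A(n/k))=\xi(1+o_p(1))$; consistency suffices. The product is therefore $-\frac{b_\bullet^2(\rho)}{2\xi}A^2(n/k)(1+o_p(1))$, which is the claimed representation. As a consistency check, the same result follows by subtracting \eqref{eq:RB_dist} from \eqref{eq:NRB_dist}: the bias coefficients $b'_\bullet-\frac{3b^2_\bullet}{2\xi}$ and $b'_\bullet-\frac{b^2_\bullet}{\xi}$ differ by $-\frac{b^2_\bullet}{2\xi}$. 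The direct factorization is preferable, though, because subtracting two "$\overset{d}{=}$" representations requires that both use the same $Z_k^\bullet$. They do, since both come from the same $\widehat{\xi}^{\,\bullet}$, but the direct route makes this automatic.

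For the consequence, I note that the difference is $O_p(A^2(n/k))$. Under the regimes of Theorem~\ref{theo:RB_unified}, $\sqrt{k}A^2(n/k)\to0$ or $\to\lambda_A$. In the first regime, $\sqrt{k}$ times the difference is $o_p(1)$, so Slutsky's lemma gives identical limits. In the second, both estimators are $\mathcal{N}$ with the same variance $\sigma_\bullet^2$ and means differing only by $-\lambda_A b^2_\bullet/(2\xi)$, a difference located in the second-order bias. The main point needing care is the claim that $O(A(n/k))$ and $O_p(1/\sqrt{k})$ terms multiplied by $A^2$ are absorbed into $A^2(1+o_p(1))$. This holds simply because both are $o_p(1)$, so I expect no real obstacle. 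The same argument applies verbatim to NRBH versus RBH with $b_{\rm H}=1/(1-\rho)$.
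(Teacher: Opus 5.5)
Your proof is correct, but it takes a different route from the paper. The paper gives no separate argument for the corollary. It is read off by subtracting the representation of Theorem~\ref{theo:RB_unified}(i) from \eqref{eq:NRB_dist}: the terms $\xi+\sigma_\bullet Z_k^\bullet/\sqrt{k}$ and $b'_\bullet(\rho)A^2(n/k)$ cancel, leaving $b_{\rm NRB\bullet}-b_{\rm RB\bullet}=-b^2_\bullet(\rho)/(2\xi)$. That is precisely your ``consistency check''.

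You instead use the exact identity $\widehat{\xi}^{\rm NRB\bullet}_{(\rho,\beta)}(p,k)-\widehat{\xi}^{\rm RB\bullet}_{(\rho,\beta)}(p,k)=\widehat{\xi}^{\,\bullet}(p,k)\,(1+x-e^{x})$, with deterministic $x=b_\bullet(\rho)A(n/k)/\xi$. This has three advantages. It needs only consistency of $\widehat{\xi}^{\,\bullet}(p,k)$ and $A(n/k)\to0$, with no third-order condition and no $b'_\bullet$ or $\delta$. It gives the relation with $=$ rather than $\stackrel{d}{=}$. And it makes automatic the cancellations that the subtraction route handles only implicitly. These are the common $Z_k^\bullet$ and the cross terms of order $A(n/k)/\sqrt{k}$: the $O_p(A(n/k)/\sqrt{k})$ term in \eqref{eq:PMAR}, and the term $-\sigma_\bullet b_\bullet(\rho)A(n/k)Z_k^\bullet/(\xi\sqrt{k})$ generated by the correction factor. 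They are identical in both estimators but are not $o_p(A^2(n/k))$ unless $\sqrt{k}\,|A(n/k)|\to\infty$. So they cannot be ignored if the two representations are treated as unrelated.

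The paper's route, in turn, comes for free once both theorems are established. It also produces the individual coefficients $b_{\rm RB\bullet}$ and $b_{\rm NRB\bullet}$ that feed the AREFF comparisons. Your reading of the ``consequently'' clause is the correct one: identical limits when $\sqrt{k}A^2(n/k)\to0$, and equal variances with means shifted by $-\lambda_A b^2_\bullet(\rho)/(2\xi)$ when $\sqrt{k}A^2(n/k)\to\lambda_A$.
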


\medskip 
\begin{remark}
The results of {\bf Theorem \ref{theo:RB_unified}{\textit{(i)}}} and {\bf Theorem \ref{theo:NRB_known}} show that nonlinear RB estimators provide refinements at the level of second-order bias terms, without affecting asymptotic variance. These refinements may lead to improved finite-sample performance, particularly in scenarios where bias terms share the same sign, a phenomenon that is further investigated in the simulation study.
\end{remark}
\medskip

\begin{proposition}
Under the assumptions of {\bf Theorem \ref{theo:NRB_known}}, let $(\hat{\rho}, \hat{\beta})$ be consistent estimators of $(\rho,\beta)$ computed at an intermediate level $k_1$ such that $k = o(k_1)$. Assume further that $(\hat{\rho} - \rho)\log n = o_p(1)$ and that \((\hat\beta-\beta)/\beta \stackrel{p}{\sim} -(\hat{\rho}-\rho)\, \log\!\left(n/k_1\right),\) a condition satisfied by the $\beta$-estimator considered.	Then, the {\rm RB} estimator $\hat{\xi}^{\mathrm{NRB\bullet}}(p, k)\equiv \hat{\xi}^{\mathrm{NRB\bullet}}_{(\hat{\rho},\hat{\beta})}(p, k)$, with  $\bullet=\{\rm PM, \rm L\}$, satisfy
\[\hat{\xi}^{\mathrm{NRB\bullet}}_{(\hat{\rho},\hat{\beta})}(p, k)
\stackrel{d}{=}\;
\hat{\xi}^{\mathrm{NRB\bullet}}_{(\rho,\beta)}(p, k)-
b_\bullet(\rho)\, 
\bigl(\hat{\rho}-\rho\bigr)	\log\!\left(\frac{k_1}{k}\right)A\!\left(\frac{n}{k}\right)
\bigl(1+o_p(1)\bigr).\]
Under this assumption, we obtain
\begin{eqnarray*}
    \hat{\xi}^{\mathrm{NRB\bullet}}_{(\hat{\rho},\hat{\beta})}(p, k)-
\hat{\xi}^{\mathrm{NRB\bullet}}_{(\rho,\beta)}(p, k)
& \stackrel{p}{\sim} & 
\hat{\xi}^{\mathrm{RB\bullet}}_{(\hat{\rho},\hat{\beta})}(p, k)-
\hat{\xi}^{\mathrm{RB\bullet}}_{(\rho,\beta)}(p, k)\\
& \stackrel{p}{\sim}& -
b_\bullet(\rho)\, A(n/k)\,(\hat{\rho}-\rho)\,\log\!\left(\frac{k_1}{k}\right),
\end{eqnarray*}
with $\stackrel{p}{\sim}$ denoting asymptotic equivalence in probability. As a consequence, both estimators $\hat{\xi}^{\mathrm{RB\bullet}}_{(\hat{\rho},\hat{\beta})}(k)$
and $\hat{\xi}^{\mathrm{NRB\bullet}}_{(\hat{\rho},\hat{\beta})}(k)$
are consistent provided that $(\hat{\rho} -\rho)\, \log\left(k/k_1\right)=o_p\left(1/A(n/k)\right)$.
Moreover, these estimators admit an asymptotic distribution if the condition
\((\hat{\rho}-\rho)\,\log\!\left(k/k_1\right)=o_p\!\left(1/(\sqrt{k}\,A(n/k))\right)\)holds.
\end{proposition}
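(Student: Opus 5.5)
The plan is a first-order (delta-method) expansion of the random correction factor around its value at the true $(\rho,\beta)$. Write
\[
\hat{\xi}^{\mathrm{NRB\bullet}}_{(\hat\rho,\hat\beta)}(p,k)=\widehat{\xi}^{\,\bullet}(p,k)\bigl(2-\exp\{\hat u\}\bigr),\qquad
\hat u:=b_\bullet(\hat\rho)\,\hat\beta\,(n/k)^{\hat\rho},
\]
and $u:=b_\bullet(\rho)\beta(n/k)^{\rho}=A(n/k)/\xi$. The corresponding estimator with known parameters uses $u$ in place of $\hat u$. Since $2-e^{\hat u}-(2-e^{u})=-e^{u}(e^{\hat u-u}-1)$ and $u\to0$, it suffices to show that $\hat u-u=o_p(1)$ and to identify its leading term. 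Then
\[
\hat{\xi}^{\mathrm{NRB\bullet}}_{(\hat\rho,\hat\beta)}-\hat{\xi}^{\mathrm{NRB\bullet}}_{(\rho,\beta)}=-\widehat{\xi}^{\,\bullet}(p,k)\,(\hat u-u)(1+o_p(1)).
\]
We also use $\widehat{\xi}^{\,\bullet}(p,k)=\xi(1+o_p(1))$, which follows from \eqref{eq:PMAR} and \eqref{eq:LpAR}.

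The key computation is the expansion of $\hat u/u$. We have
\[
\frac{\hat u}{u}=\frac{b_\bullet(\hat\rho)}{b_\bullet(\rho)}\cdot\frac{\hat\beta}{\beta}\cdot\exp\{(\hat\rho-\rho)\log(n/k)\}.
\]
Because $(\hat\rho-\rho)\log n=o_p(1)$, every factor tends to $1$. Expanding each factor to first order gives
\[
\frac{\hat u}{u}-1=(\hat\rho-\rho)\Bigl[\frac{b'_{\rho,\bullet}(\rho)}{b_\bullet(\rho)}+\log(n/k)\Bigr]+\frac{\hat\beta-\beta}{\beta}+o_p\bigl(|\hat\rho-\rho|\log n\bigr),
\]
where $b'_{\rho,\bullet}$ denotes the $\rho$-derivative. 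This derivative is finite because $b_\bullet$ is smooth and nonvanishing for $\rho<0$; $b_{\rm PM}$ and $b_{\rm L}$ are explicit.

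Now insert the hypothesis $(\hat\beta-\beta)/\beta\sim-(\hat\rho-\rho)\log(n/k_1)$. The two logarithms combine as $\log(n/k)-\log(n/k_1)=\log(k_1/k)$, which tends to $\infty$ since $k=o(k_1)$. The bounded term $b'_{\rho,\bullet}/b_\bullet$ is therefore negligible, and
\[
\hat u-u\stackrel{p}{\sim}u\,(\hat\rho-\rho)\log(k_1/k).
\]
Multiplying by $-\widehat{\xi}^{\,\bullet}\sim-\xi$ and using $\xi u=A(n/k)$ produces the stated correction term $-(\hat\rho-\rho)\log(k_1/k)A(n/k)$. I expect the stated factor $b_\bullet(\rho)$ to come from the exact bookkeeping between $u$ and $A(n/k)$. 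I will track that constant carefully rather than assert it.

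For the linear RB estimators the same argument applies with the simpler factor $1-\hat u$. Hence the two differences are asymptotically equivalent in probability: the exponential contributes only a factor $e^{u}=1+o(1)$. This gives the displayed chain of $\stackrel{p}{\sim}$ relations.

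The consequences then follow by combining this with Theorem~\ref{theo:NRB_known} and Theorem~\ref{theo:RB_unified}. If the extra term is $o_p(1)$, i.e.\ $(\hat\rho-\rho)\log(k/k_1)=o_p(1/A(n/k))$, consistency holds. If it is $o_p(1/\sqrt k)$, the representation \eqref{eq:NRB_dist} survives, which gives asymptotic normality.

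The main obstacle is the delicate step of cancelling the $\log(n/k)$ contribution against the $\log(n/k_1)$ contribution of $\hat\beta$. This needs the joint asymptotic equivalence for $\hat\beta$ to hold with a remainder $o_p(|\hat\rho-\rho|\log n)$, so that the residual is still dominated by $\log(k_1/k)$. If $\log(k_1/k)$ grows slowly this domination is not automatic, and I would state the needed remainder condition on $\hat\beta$ explicitly.
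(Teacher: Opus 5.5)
Your proposal follows essentially the same route as the paper. Both expand the random correction factor to first order (delta method), let the $\log(n/k)$ coming from $(n/k)^{\hat\rho}$ be offset by the $-\log(n/k_1)$ carried by $\hat\beta$ to leave $\log(k_1/k)$, and drop the bounded term $b'_{\rho,\bullet}(\rho)/b_\bullet(\rho)$ (the paper's $f_\bullet(\rho)$) because $\log(k_1/k)\to\infty$.

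The constant you hedged on comes out directly once you correct $u=b_\bullet(\rho)\beta(n/k)^{\rho}=b_\bullet(\rho)A(n/k)/\xi$, not $A(n/k)/\xi$. Your worry about the cancellation is legitimate, and the paper does not address it; it simply substitutes the $\stackrel{p}{\sim}$ relation. The condition actually needed is $r:=\frac{\hat\beta}{\beta}\bigl(\frac{n}{k_1}\bigr)^{\hat\rho-\rho}-1=o_p\bigl((\hat\rho-\rho)\log(k_1/k)\bigr)$, whereas the stated hypothesis only gives $o_p\bigl((\hat\rho-\rho)\log(n/k_1)\bigr)$. In your ratio form this yields $\hat u/u=\frac{b_\bullet(\hat\rho)}{b_\bullet(\rho)}\bigl(\frac{k_1}{k}\bigr)^{\hat\rho-\rho}(1+r)$, which also takes care of the quadratic remainders.
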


\begin{proof}
By definition, the NRB$\bullet$ estimators can be written as
\[\hat{\xi}^{\mathrm{NRB_\bullet}}_{(\hat{\rho},\hat{\beta})}(p, k)=
\hat{\xi}^{\bullet}_{(\hat{\rho},\hat{\beta})}(p, k)
\left(2 -\exp\!\left(b_\bullet(\hat \rho)\hat{\beta}\left(\frac{n}{k}\right)^{\hat{\rho}}\right)
\right).\]
Setting $$c_\bullet(\rho, \beta):=\exp\!\left(b_\bullet( \rho){\beta}\left(\frac{n}{k}\right)^{{\rho}}\right),$$
the {delta method} enables us to write 
\begin{multline*}c_\bullet(\hat \rho, \hat \beta)=c_\bullet(\rho,\beta)+(\hat \beta-\beta)\frac{\partial c_\bullet(\rho,\beta)}{\partial \beta}(1+o_p(1)) \\
+(\hat \rho-\rho)\frac{\partial c_\bullet(\rho,\beta)}{\partial \rho}(1+o_p(1)),
\end{multline*}
where 
\[\frac{\partial c_\bullet(\rho,\beta)}{\partial \beta}=
b_\bullet(\rho)\left(\frac{n}{k}\right)^{\rho}\, c_\bullet(\rho, \beta)\]
and 
\begin{eqnarray*}
		\frac{\partial c_\bullet(\rho,\beta)}{\partial \rho}&=&
		\left(\frac{\partial\, b_\bullet(\rho)}{\partial\,  \rho}\beta\left(\frac{n}{k}\right)^{\rho}+
	b_\bullet(\rho)\beta\left(\frac{n}{k}\right)^{\rho}\log \left(\frac{n}{k}\right)\right)c_\bullet(\rho, \beta)\\
		&=& c_\bullet(\rho, \beta)\beta\left(\frac{n}{k}\right)^{\rho}\left(\frac{\partial\, b_\bullet(\rho)}{\partial\,  \rho}+b_\bullet(\rho)\log \left(\frac{n}{k}\right)\right)\\
		&=&c_\bullet(\rho, \beta)\, b_\bullet(\rho)\, \beta\left(\frac{n}{k}\right)^{\rho}\left(f_\bullet(\rho)+\log \left(\frac{n}{k}\right)\right),
	\end{eqnarray*}
	with $f_\bullet(\rho):=\frac{1}{b_\bullet(\rho)}\frac{\partial\, b_\bullet(\rho)}{\partial\,  \rho}.$ Since \(k=o(k_1)\), \(f_\bullet(\rho)=o_p\left(\log\left(\frac{k_1}{k}\right)
\right)\) and 
\begin{eqnarray*}
\frac{\partial c_\bullet(\rho,\beta)}{\partial \rho}&\sim &c_\bullet(\rho, \beta)\, b_\bullet(\rho)\, \beta\left(\frac{n}{k}\right)^{\rho}\log \left(\frac{n}{k}\right).
\end{eqnarray*}
	Then, 
	\begin{eqnarray*}
	\hat{\xi}^{\mathrm{NRB_\bullet}}_{(\hat{\rho},\hat{\beta})}(p, k)
	&=&
	\hat{\xi}^{\bullet}_{(\hat{\rho},\hat{\beta})}(p, k)
	\left(
	2 -
	\exp\!\left(b_\bullet(\hat \rho)\hat{\beta}\left(\frac{n}{k}\right)^{\hat{\rho}}\right)
	\right)\\
    & = & \hat{\xi}^{\bullet}_{(\hat{\rho},\hat{\beta})}(p, k)
	\left(
	2 -c_\bullet(\hat\rho, \hat\beta)\right)\\
	&	=\;&	\hat{\xi}^{\bullet}_{(\hat{\rho},\hat{\beta})}(p, k)
	\left(
	2 -c_\bullet(\rho,\beta)-(\hat \beta-\beta)\frac{\partial c_\bullet(\rho,\beta)}{\partial \beta}(1+o_p(1))\right.
   \\
      & & \hspace{10pc} {\left. -(\hat \rho-\rho)\frac{\partial c_\bullet(\rho,\beta)}{\partial \rho}(1+o_p(1))\right).}
		\end{eqnarray*}
Using the asymptotic representation of the \(\bullet\)-type estimator, together with the asymptotic expansion of \(c_\bullet(\rho,\beta)\), and regrouping terms leads to
	
	\begin{multline*}
	\hat{\xi}^{\mathrm{NRB\bullet}}_{(\hat{\rho},\hat{\beta})}(p, k)
	\stackrel{d}{=}\;
	\hat{\xi}^{\mathrm{NRB\bullet}}_{(\rho,\beta)}(p, k)
	-b_\bullet(\rho)
	\left[\frac{\hat \beta-\beta}{\beta}+
	\bigl(\hat{\rho}-\rho\bigr)\log\!\left(\frac{n}{k}\right)
	\right]A\!\left(\frac{n}{k}\right)
	\bigl(1+o_p(1)\bigr).
\end{multline*}
Since 	$(\hat{\beta}-\beta)/\beta \stackrel{p}{\sim} -(\hat{\rho}-\rho)\, \log\!\left(n/k_1\right)$ (\citet{gomes2008tail}), we can write 

	\begin{align*}
		\hat{\xi}^{\mathrm{NRB\bullet}}_{(\hat{\rho},\hat{\beta})}(p, k)
		\stackrel{d}{=}\;&
		\hat{\xi}^{\mathrm{NRB\bullet}}_{(\rho,\beta)}(p, k)
		-
		b_\bullet(\rho)\,
	\bigl(\hat{\rho}-\rho\bigr)	\log\!\left(\frac{k_1}{k}\right)A\!\left(\frac{n}{k}\right)
		\bigl(1+o_p(1)\bigr),
	\end{align*}
	which completes the proof.
\end{proof}

\subsection{Asymptotic comparison at optimal levels}

Let $\tilde{\xi}(k)$ denote a generic {RB EVI-estimator that admits} the following distributional expansion:
\begin{equation}\label{eq:RB_dist}
\tilde{\xi}(k)\overset{d}{=}\xi+\frac{\tilde \sigma}{\sqrt{k}}\,\tilde Z_k + \tilde b_2\,A^2(n/k)\bigl(1+o_p(1)\bigr),
\end{equation}
where $\tilde \sigma$ and $\tilde b_2$ are estimating specific constants and $A(\cdot)$ denotes the second-order auxiliary function. Let ${\rm Bias}_{\infty}\!\left(\tilde{\xi}(k)\right)$ and
${\rm Var}_{\infty}\!\left(\tilde{\xi}(k)\right)$ denote,
respectively, the asymptotic bias and variance of the generic RB estimator
$\tilde{\xi}(k)$. The
asymptotic mean squared error (AMSE) of $\tilde{\xi}(k)$ can be expressed as
\begin{align}
	\label{eq:AMSE}
	{\rm AMSE}\!\left(\tilde{\xi}(k)\right)
	&=
	{\rm Var}_{\infty}\!\left(\tilde{\xi}(k)\right)
	+
	\left(
	{\rm Bias}_{\infty}\!\left(\tilde{\xi}(k)\right)
	\right)^2 \nonumber\\
	&=
	\frac{\tilde \sigma^2}{k}
	+
	\tilde b^2\,A^4(n/k),
	\qquad \tilde b\neq 0.
\end{align}

Let $\tilde{k}_0$ be the value of $k$ that minimizes ${\rm AMSE}\!\left(\tilde{\xi}(k)\right)$, in \eqref{eq:AMSE}. The {RB} estimator evaluated at this optimal level is denoted by $\tilde{\xi}_0 = \tilde{\xi}(\tilde{k}_0)$, and the associated limiting mean squared error is referred to as ${\rm LMSE}$.

Under standard regular variation assumptions ({\citet{bingham1987regular,bingham2007regular}}), and provided that $\tilde b \neq 0$, it follows that there exists a normalizing function $\varphi(n,\xi,\rho)$, depending only on the underlying model and not on the estimator itself ({\citet{deHaanPeng1998}}), such that
\[
\lim_{n\to\infty}
\varphi(n,\xi,\rho)\,
{\rm AMSE}\!\left(\tilde{\xi}_0\right)
=
\left(\tilde \sigma^2\right)^{-\frac{2\rho}{1-4\rho}}
\left(\tilde b^2\right)^{\frac{1}{1-4\rho}}
=: {\rm LMSE}\!\left(\tilde{\xi}_0\right).
\]

By considering ratios of LMSE values corresponding to different RB estimators, the normalizing function $\varphi(\cdot)$ cancels out, leading to an efficiency measure that depends solely on the leading bias and variance terms. Specifically, for two RB
estimators $\tilde{\xi}^{(1)}$ and $\tilde{\xi}^{(2)}$, we define the asymptotic relative efficiency (AREFF) as
\begin{equation}\label{eq:arref_def}
	{\rm AREFF}_{(1)|(2)}
	:=
	\sqrt{
		\frac{{\rm LMSE}(\tilde{\xi}_0^{(2)})}
		{{\rm LMSE}(\tilde{\xi}_0^{(1)})}
	}
	=
	\left(\frac{\tilde \sigma^{(2)}}{\tilde \sigma^{(1)}}\right)^{-\frac{4\rho}{1-4\rho}}
	\left|\frac{\tilde b^{(2)}}{\tilde b^{(1)}}\right|^{\frac{1}{1-4\rho}}.
\end{equation}

Values of ${\rm AREFF}_{(1)|(2)}$ greater than one indicate that
$\tilde{\xi}^{(2)}(k)$ is asymptotically more efficient than $\tilde{\xi}^{(1)}(k)$,
whereas values below one favour $\tilde{\xi}^{(1)}(k)$. Equality corresponds to equivalent asymptotic performance.

We consider two types of comparisons. First, each newly proposed RB estimator is compared with its corresponding original RB estimator. Second, all new estimators are compared with the benchmark RBH estimator defined in \eqref{eq:CH}.

\medskip
\begin{remark}
The threshold minimizing the AMSE of the RBH estimator, in \eqref{eq:CH}, is obtained as
\begin{equation}
	\label{eq:k0_CH}
	\tilde k_0^{\rm RBH}
	=
	\left(
	-4\rho
	\left(
	\frac{\delta}{1-2\rho}
	-
	\frac{1}{(1-\rho)^2}
	\right)^2
	\beta^4
	\right)^{-\frac{1}{1-4\rho}}
	n^{-\frac{4\rho}{1-4\rho}}.
\end{equation}
\end{remark}
\medskip 

In what follows, we take
\begin{eqnarray*}
\left(\hat{\xi}^{(1,p)}, \hat{\xi}^{(2,p)}\right)&\equiv &\left\{ 
\left({{\rm RBPM}(p)}, {\rm RBH}\right), 
\left({{\rm NRBPM}(p)}, {\rm NRBH}\right),  \right. \\
&&\left({{\rm NRBPM}(p)}, {\rm RBH}\right), \left({{\rm NRBPM}(p)}, {\,{\rm RBPM}(p)}\right),   \\
&& \left({{\rm NRBL}(p)}, {\rm NRBH}\right), \left({{\rm NRBL}(p)}, {\rm RBH}\right),\\
&& \left.\left({{\rm NRBL}(p)}, {\,{\rm RBL}(p)}\right)\right\}.
\end{eqnarray*}

The AREFF of the RBPM$(p)$ estimators relative to the RBH estimator was presented in \citet{pm_red2024} for $p>0$ and is extended here to $p>-1/2$. The resulting expressions for the AREFF are given by

\begin{multline}\label{eq:AREFF_NRBPMNCH}
{\rm AREFF}_{{\rm NRBPM}(p)\,|\,{\rm NRBH}}
=
	\left(
	\frac{1}{p}
	\sqrt{
		\frac{\Gamma(2p+1)}{\Gamma^2(p+1)}-1
	}
	\right)^{\frac{4\rho}{1-4\rho}}
	\times \\[2mm]
\left|
	\frac{
		2(p\rho)^2
		\left(
		\frac{\delta}{1-2\rho}
		-
{\frac{3}{2(1-\rho)^2}}
		\right)
	}{
		\frac{p\bigl(1-(2-\delta)\rho\bigr)}{(1-2\rho)^p}
		+
		\frac{2(2+p\rho)}{(1-\rho)^p}
		-
		\frac{p+2}{(1-\rho)^{2p}}
		-\delta p\rho -2
	}
	\right|^{\frac{1}{1-4\rho}}.
\end{multline}

\vspace{-0.2cm}

\begin{multline}\label{eq:AREFF_NRBPMCH}
	{\rm AREFF}_{{\rm NRBPM}(p)\,|\,{\rm RBH}}
	=
	\left(
	\frac{1}{p}
	\sqrt{
		\frac{\Gamma(2p+1)}{\Gamma^2(p+1)}-1
	}
	\right)^{\frac{4\rho}{1-4\rho}}
	\times \\[2mm]
	\left|
	\frac{
		2(p\rho)^2
		\left(
		\frac{\delta}{1-2\rho}
		-
		\frac{1}{(1-\rho)^2}
		\right)
	}{
		\frac{p\bigl(1-(2-\delta)\rho\bigr)}{(1-2\rho)^p}
		+
		\frac{2(2+p\rho)}{(1-\rho)^p}
		-
		\frac{p+2}{(1-\rho)^{2p}}
		-\delta p\rho -2
	}
	\right|^{\frac{1}{1-4\rho}}.
\end{multline}

\vspace{-0.2cm}
\begin{multline}\label{eq:AREFF_NRBPMRBPM}
	{\rm AREFF}_{{\rm NRBPM}(p)\,|\,{\rm RBPM(p)}}
	 =	\\
	 \left|
	\frac{
	\frac{p\bigl(1-(2-\delta)\rho\bigr)}{(1-2\rho)^p}
	+
	\frac{2(1+p\rho)}{(1-\rho)^p}
	-
	\frac{p+1}{(1-\rho)^{2p}}
	-
	\bigl(1+\delta p\rho\bigr)
	}{
		\frac{p\bigl(1-(2-\delta)\rho\bigr)}{(1-2\rho)^p}
		+
		\frac{2(2+p\rho)}{(1-\rho)^p}
		-
		\frac{p+2}{(1-\rho)^{2p}}
		-\delta p\rho -2
	}
	\right|^{\frac{1}{1-4\rho}}. 
\end{multline}

\vspace{-0.2cm}
\begin{multline}\label{eq:AREFF_NRBLCH}
{\rm AREFF}_{{\rm NRBL}(p)\,|\,{\rm NRBH}}
	=
	\left(
	\sqrt{
		\frac{\Gamma(2p-1)}{\Gamma^2(p)}
	}
	\right)^{\frac{4\rho}{1-4\rho}}
	\times \\
	\left|
	\frac{
	\rho
		\left(
		\frac{\delta}{1-2\rho}
		-
		\frac{3}{2(1-\rho)^2}
		\right)
	}{
\frac{1-2\rho+\delta\rho}{(1-2\rho)^{p}}-\frac{2+\rho-2\rho(1-\rho)^{p}}{2(1-\rho)^{2p}} }
	\right|^{\frac{1}{1-4\rho}}.
\end{multline}
\vspace{-0.5cm}

\begin{multline}\label{eq:AREFF_NRBLCH}
	{\rm AREFF}_{{\rm NRBL}(p)\,|\,{\rm RBH}}
	=
	\left(
	\sqrt{
		\frac{\Gamma(2p-1)}{\Gamma^2(p)}
	}
	\right)^{\frac{4\rho}{1-4\rho}}
	\times \\
	\left|
	\frac{
	\rho
		\left(
		\frac{\delta}{1-2\rho}
		-
		\frac{1}{(1-\rho)^2}
		\right)
	}{
\frac{1-2\rho+\delta\rho}{(1-2\rho)^{p}}-\frac{2+\rho-2\rho(1-\rho)^{p}}{2(1-\rho)^{2p}} }
	\right|^{\frac{1}{1-4\rho}}.
\end{multline}

\begin{equation}\label{eq:AREFF_NRBLRBL}
	{\rm AREFF}_{{\rm NRBL}(p)\,|\,{\rm RBL}(p)}
	= \left|
	\frac{
	\frac{1-2\rho+\delta\rho}{(1-2\rho)^{p}}-\frac{1-\rho(1-\rho)^{p}}{(1-\rho)^{2p}} 
	}{
		\frac{1-2\rho+\delta\rho}{(1-2\rho)^{p}}-\frac{2+\rho-2\rho(1-\rho)^{p}}{2(1-\rho)^{2p}}
	}
	\right|^{\frac{1}{1-4\rho}}.
\end{equation}

\medskip

For clarity, the analysis focuses on a reduced class of Pareto-type models. More specifically, in the Burr and Generalized Pareto cases (i.e., when $\delta = 1$), {seven} contour representations of
${\rm AREFF}_{{\hat{\xi}^{(1,p)}| \hat{\xi}^{(2,p)}}}$ are provided in Figures~\ref{fig:RBPM_CH}--\ref{fig:NRBL_RBL}.
The blue contour curves identify the pairs $(p,\rho)$ for which the two estimators attain the same asymptotic efficiency. The shaded areas correspond to regions of the parameter space where $\hat{\xi}^{(1)}(k)$ {is more efficient than} $\hat{\xi}^{(2)}(k)$. The red curve represents, for each fixed value of \(\rho\), the value of \(p\)
that maximizes the AREFF within the favourable region, {providing the theoretically optimal tuning of \(p\) as a function of \(\rho\),}
\[
p_{\mathrm{opt}}(\rho)
=
\arg\max_{p}
\operatorname{AREFF}_{\hat{\xi}^{(1,p)}| \hat{\xi}^{(2,p)}} \,\
\text{subject to }
\operatorname{AREFF}_{\hat{\xi}^{(1,p)}| \hat{\xi}^{(2,p)}}(\rho)>1.
\]

\begin{figure}[htbp]
	\centering
	\includegraphics[width=0.8\textwidth]{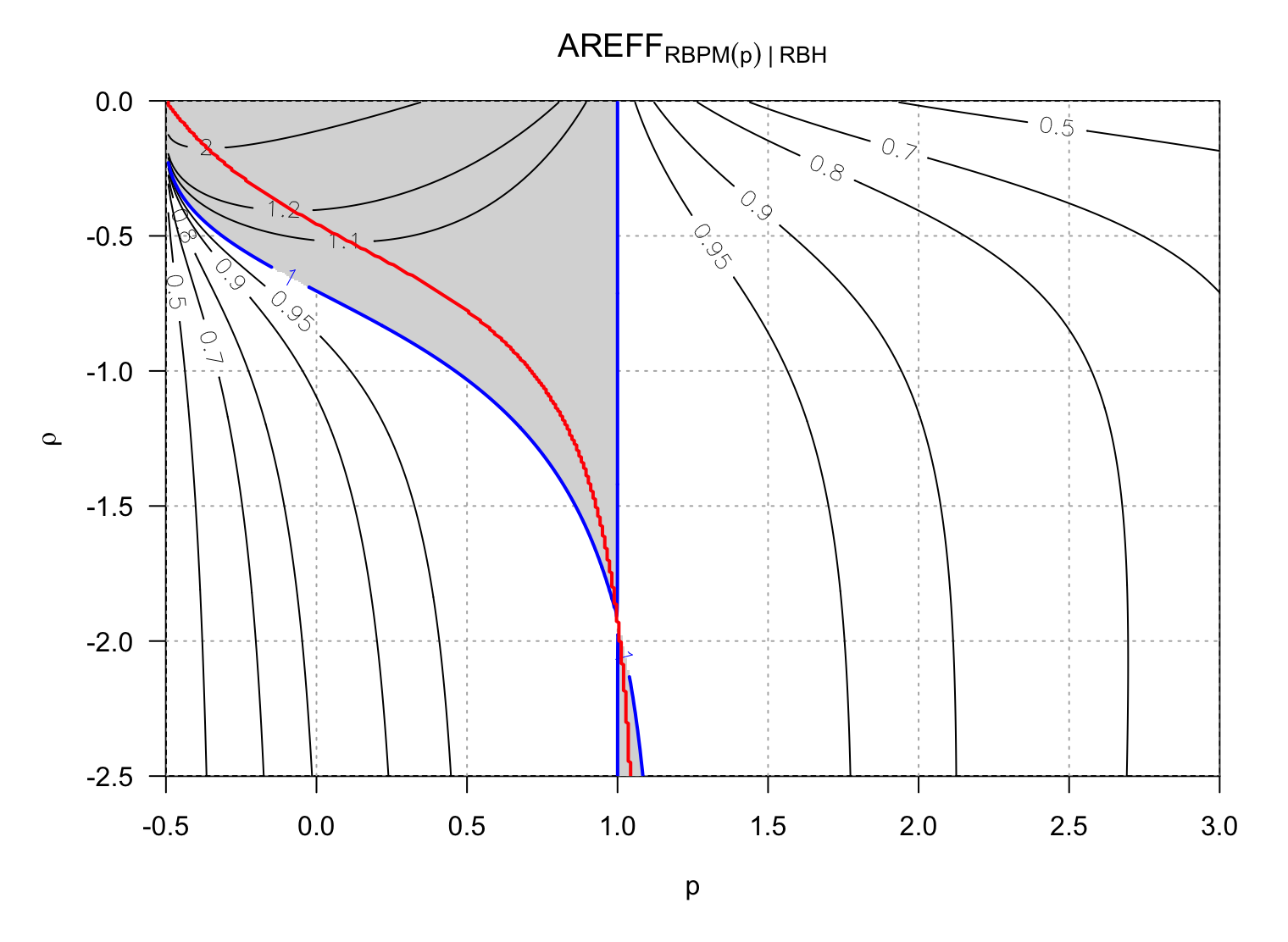}
	\caption{Contour plot of ${\rm AREFF}_{{\rm RBPM}(p)|{\rm RBH}}$, with $\delta=1$ in the $(p,\rho)$-plane.}
	\label{fig:RBPM_CH}
\end{figure}
\begin{figure}[htbp]
	\centering
	\includegraphics[width=0.8\textwidth]{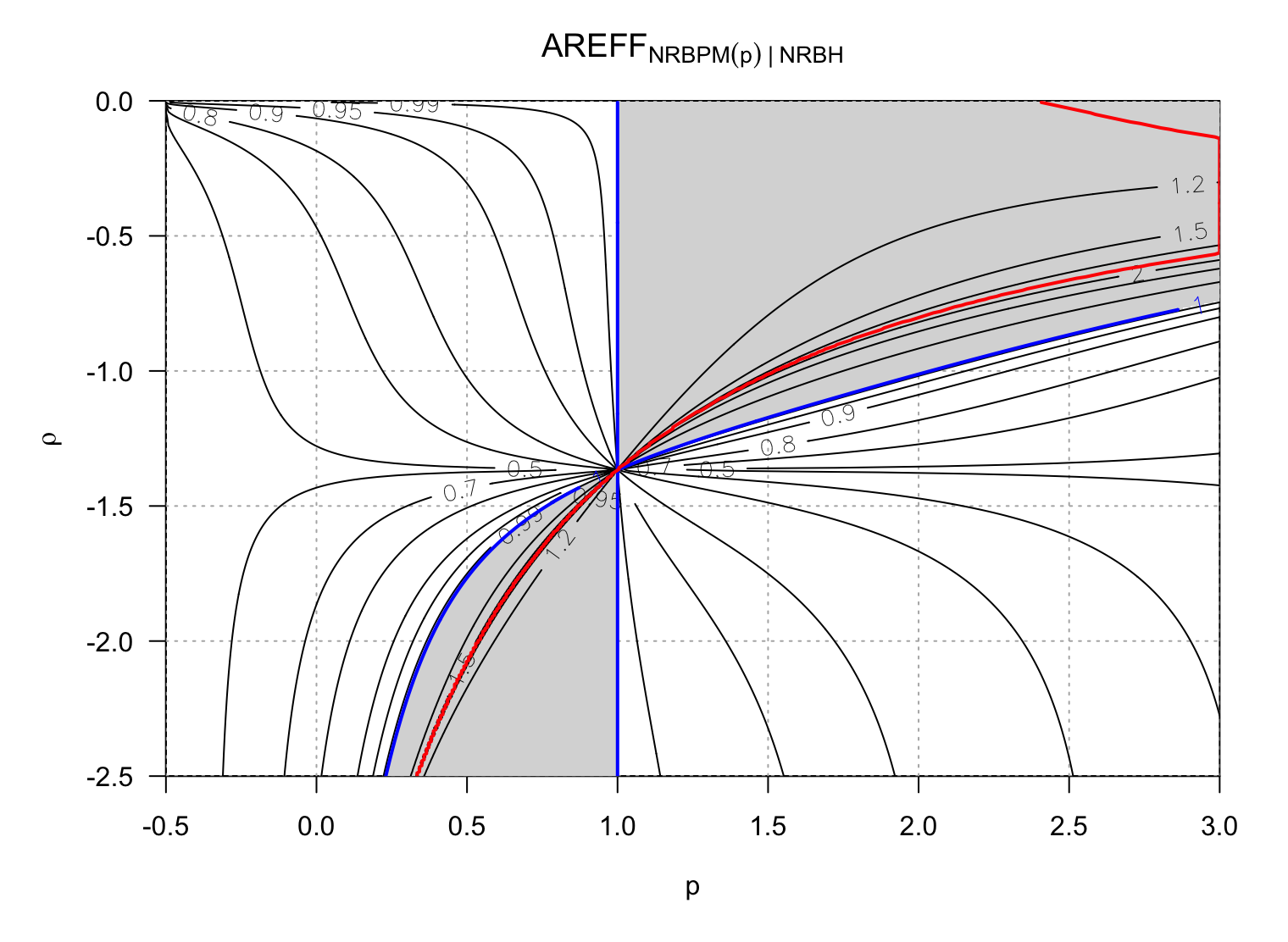}
	\caption{Contour plot of ${\rm AREFF}_{{{\rm NRBPM}}(p)|{\rm NRBH}}$, with $\delta=1$ in the $(p,\rho)$-plane.}
	\label{fig:NRBPM_NCH}
\end{figure}

\begin{figure}[htbp]
	\centering
	\includegraphics[width=0.8\textwidth]{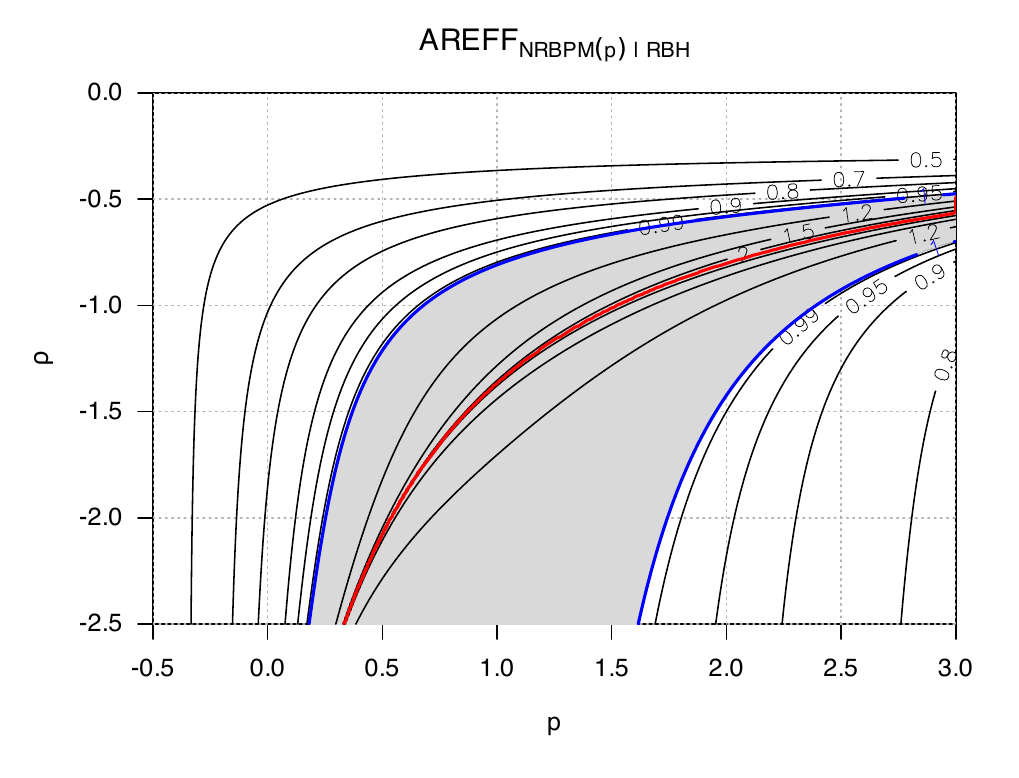}
	\caption{Contour plot of ${\rm AREFF}_{{\rm NRBPM}(p)|{\rm RBH}}$, with $\delta=1$ in the $(p,\rho)$-plane.}
	\label{fig:NRBPM_CH}
\end{figure}

\begin{figure}[htbp]
	\centering
	\includegraphics[width=0.8\textwidth]{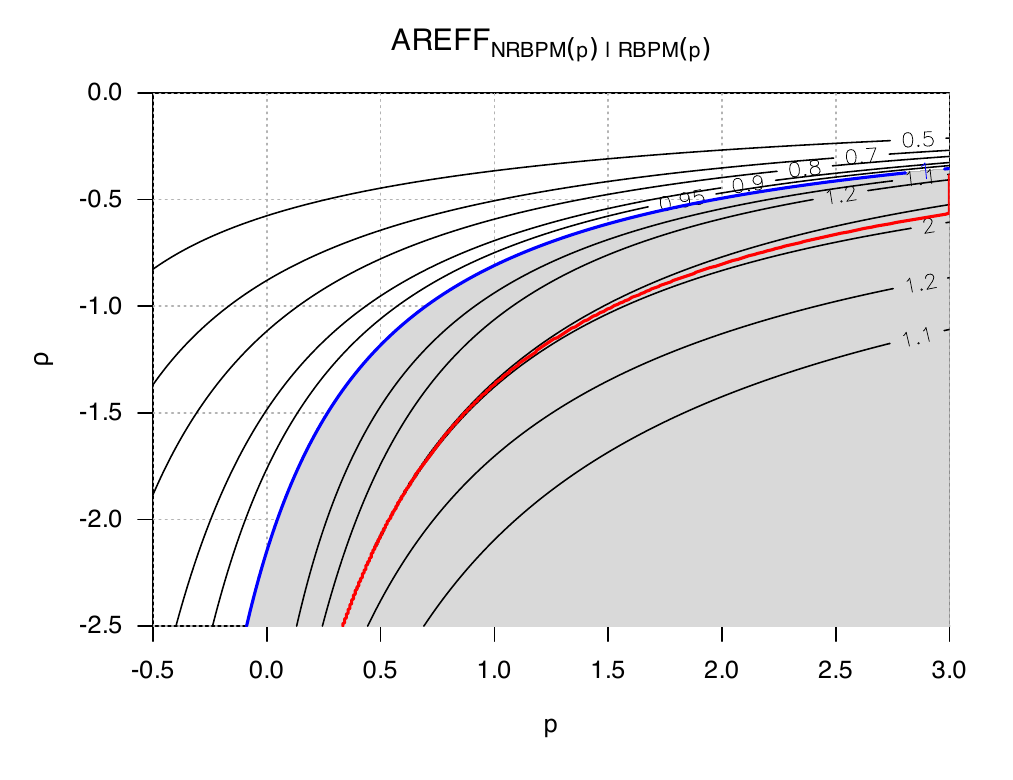}
	\caption{Contour plot of ${\rm AREFF}_{{\rm NRBPM}(p)|{\rm RBPM}(p)}$, with $\delta=1$ in the $(p,\rho)$-plane.}
	\label{fig:NRBPM_RBPM}
\end{figure}

\begin{figure}[htbp]
	\centering
	\includegraphics[width=0.8\textwidth]{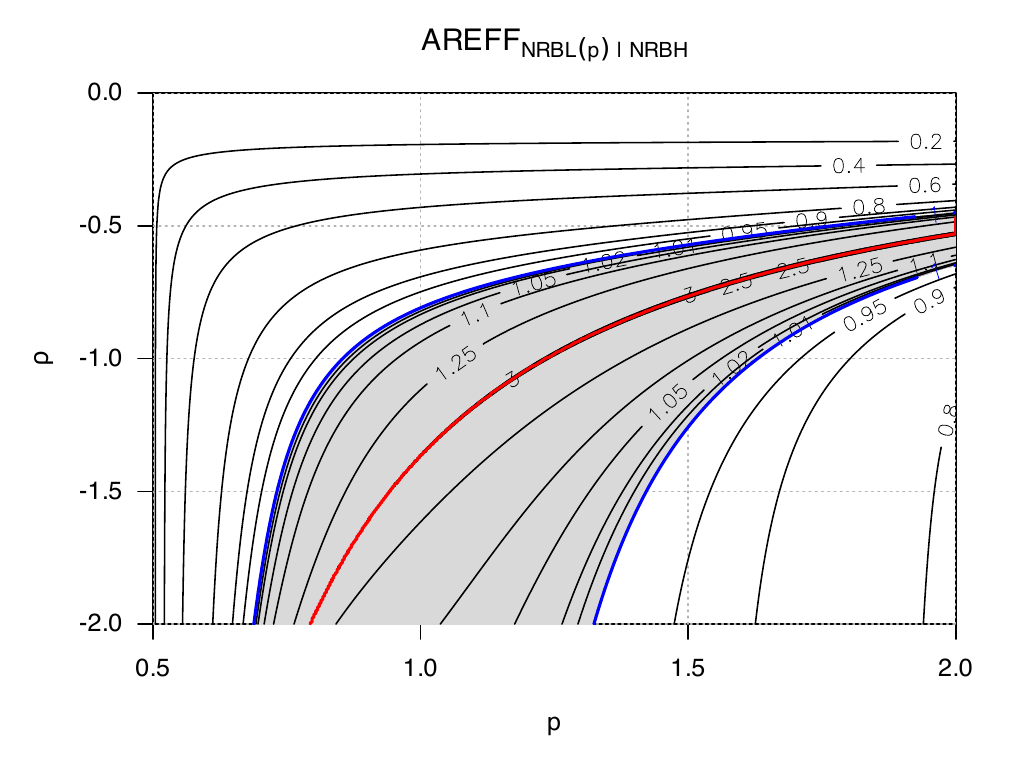}
	\caption{Contour plot of ${\rm AREFF}_{{\rm NRBL}(p)|{\rm NRBH }}$, with $\delta=1$ in the $(p,\rho)$-plane.}
	\label{fig:NRBL_NCH}
\end{figure}

\begin{figure}[htbp]
	\centering
	\includegraphics[width=0.8\textwidth]{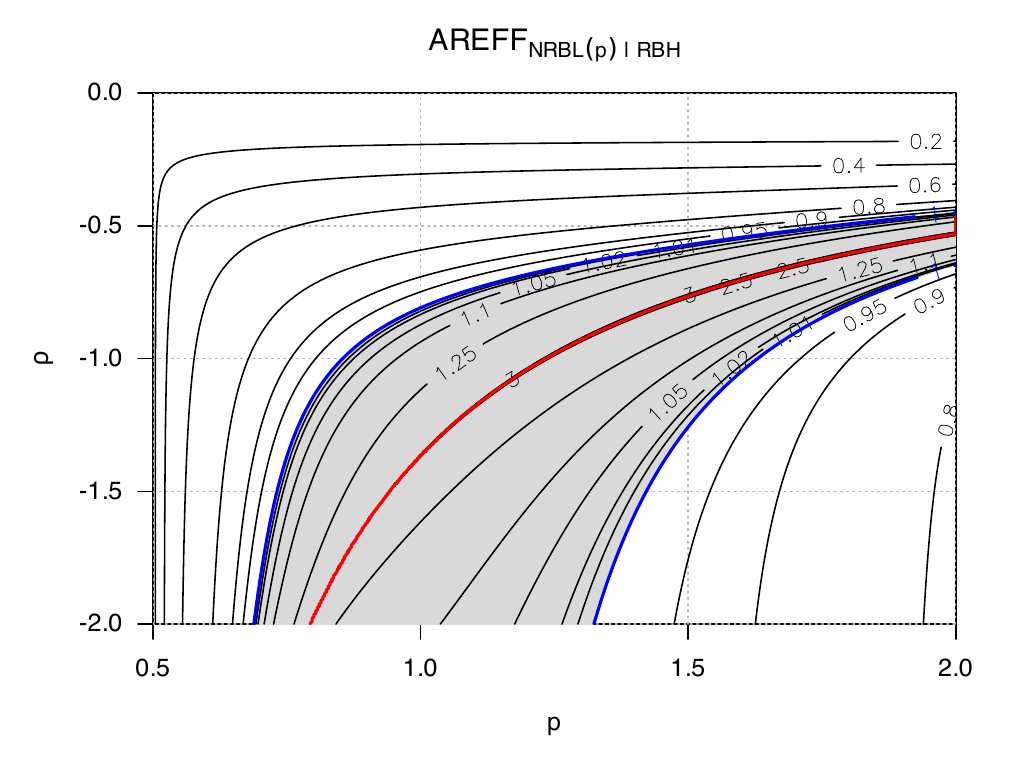}
	\caption{Contour plot of ${\rm AREFF}_{{\rm NRBL}(p)|{\rm RBH }}$, with $\delta=1$ in the $(p,\rho)$-plane.}
	\label{fig:NRBL_CH}
\end{figure}

\begin{figure}[htbp]
	\centering
	\includegraphics[width=0.8\textwidth]{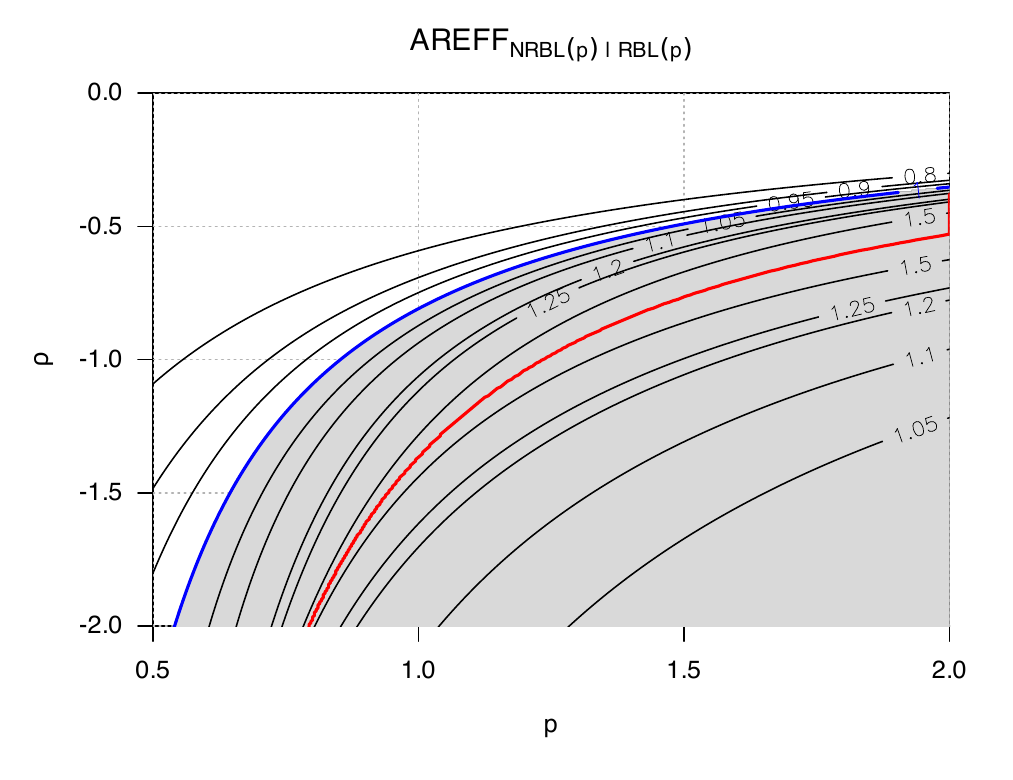}
	\caption{Contour plot of ${\rm AREFF}_{{\rm NRBL}(p)|{{\rm RBL}(p)}}$, with $\delta=1$ in the $(p,\rho)$-plane.}
	\label{fig:NRBL_RBL}
\end{figure}

The contour plots reveal that the proposed nonlinear RB estimators outperform their classical and linear counterparts over substantial regions of the $(p,\rho)$-plane. The gains are particularly pronounced for moderately negative values of the second-order parameter $\rho$, whereas they tend to diminish as $\rho$ approaches zero. Moreover, the optimal value of $p$ varies systematically with $\rho$, confirming the importance of an appropriate choice of the tuning parameter. In particular, the comparisons of NRBPM versus RBPM, in Figure \ref{fig:NRBPM_RBPM}, and NRBL versus RBL, in Figure \ref{fig:NRBL_RBL}, show that the nonlinear transformation can provide an efficiency gain beyond the corresponding linear RB estimator.

\newpage

\section{Monte Carlo simulation study}\label{sec:mc}

To evaluate the finite-sample performance of the classes of estimators, \\
$\widehat{\xi}^{\, \rm RBPM}(p,k)$, $\widehat{\xi}^{\, \rm NRBPM}(p,k)$, $\widehat{\xi}^{\, \rm RBL}(p,k)$ and $\widehat{\xi}^{\, \rm NRBL}(p,k)$, we conduct a Monte Carlo simulation study across selected tuning parameters $p$. For baseline comparisons, we include both the classic Hill estimator in \eqref{eq:Hill} and the RB Hill ($p=1$), denoted by ${\rm RBH}\equiv{\rm RBPM}(1)\equiv{\rm RBL}(1)$. The values of $p$ were selected based on findings from a preliminary study. The following performance metrics were computed using $r = 2500$ Monte Carlo replicates for sample sizes $n \in \{50, 100, 250, 500, 1000, 2500, 5000\}$. 
For each estimator, generally denoted by $\widehat{\xi}^{\bullet}(p,k)$, and sample size we obtained the simulated mean value (E) and root mean squared error (RMSE):
\begin{equation}
\begin{split}
\mathrm{E}(\widehat{\xi}^{\bullet}(p,k)) &= \frac{1}{r}\sum_{j=1}^{r}\widehat{\xi}_j^{\bullet}(p,k),\\
\mathrm{RMSE}(\widehat{\xi}^{\bullet}(p,k)) &= 
\sqrt{\frac{1}{r}\sum_{j=1}^{r}
\left(\widehat{\xi}_j^{\bullet}(p,k)-\xi\right)^2},
\label{eq_mv_rmse}
\end{split}
\end{equation}
with 
 $1\leq k \leq n-1$ and $j$ denoting the $j$-th sample. We have further computed the simulated optimum level 
\begin{equation}
\hat{k}_0^{\bullet}= \arg \min_k {\rm RMSE}[\hat{\xi}^{\bullet}(p,k)],
\label{eq:k0}%
\end{equation}
and the simulated characteristics in \eqref{eq_mv_rmse} at the optimal level in \eqref{eq:k0}.

Data were generated from the following heavy-tailed models:
\begin{itemize}
    \item Fréchet Model: $F(x) = \exp(-x^{-1/\xi})$, $x > 0$, with parameter $\xi = 0.5$ ($\rho = -1$, $\delta = 5/3$).
    \item Burr Model: $F(x) = 1 - (1 + x^{-\rho/\xi})^{1/\rho}$, $x > 0$, with parameters $\xi=1$, and $\rho= -0.8$ ($\delta = 1$).
    \item GP model: $F(x) = 1 - (1 + \xi x)^{-1/\xi}$, $x > 0$, with parameter $\xi=1$ ($\rho = -\xi$, $\delta = 1$).
    \item Half-$t_\nu$ Model: The absolute value of a Student's-$t$ distribution with $\nu =2$ degrees of freedom, corresponding to $(\xi, \rho) = (1/\nu, -2/\nu)$ with parameter $\delta = \nu(\nu^2 + 4\nu + 2) / [(\nu + 4)(\nu + 1)]$.
\end{itemize}

\subsection{Simulated results for all thresholds}%
We first evaluated the estimator stability as a function of all possible thresholds, $k$, using simulated samples of size $n = 1000$. Figures \ref{fig_frechet}--\ref{fig_halft2} show the sample paths of $E$ (left) and $\text{RMSE}$ (right). The horizontal line in the left plot marks the true EVI value. For all models, the Hill estimators mean value
sample path departs from the true value almost immediately and rises linearly with $k$, while the RMSE exhibits a sharp peak. 
For all models, the RBH and all RB/NRB variants have a stability plateau close to the true EVI, before curving upward again until $k$ approaches $1000$. This is mirrored in the RMSE plot, whereas the RB/NRB estimators trace a much flatter, wider mini\-mum across a wide range of  $k$-values. For the Fréchet model,  the mean value and RMSE pattern of all RB estimators nearly overlap, suggesting that for this particular model the choice of tuning parameter $p$ has little impact. In Figures \ref{fig_burr1}–\ref{fig_halft2}, some visible separation emerges among the RB/NRB sample paths,  most notably for the Burr and Half-$t_2$ models. For those models, the NRBPM/NRBL curves extended the stability region in the mean-value plot, remaining close to the true  EVI over a substantially wider range of $k$ values and the RMSE tends to reach the smallest value.

\begin{figure}[h!]
\includegraphics[width=.475\textwidth]{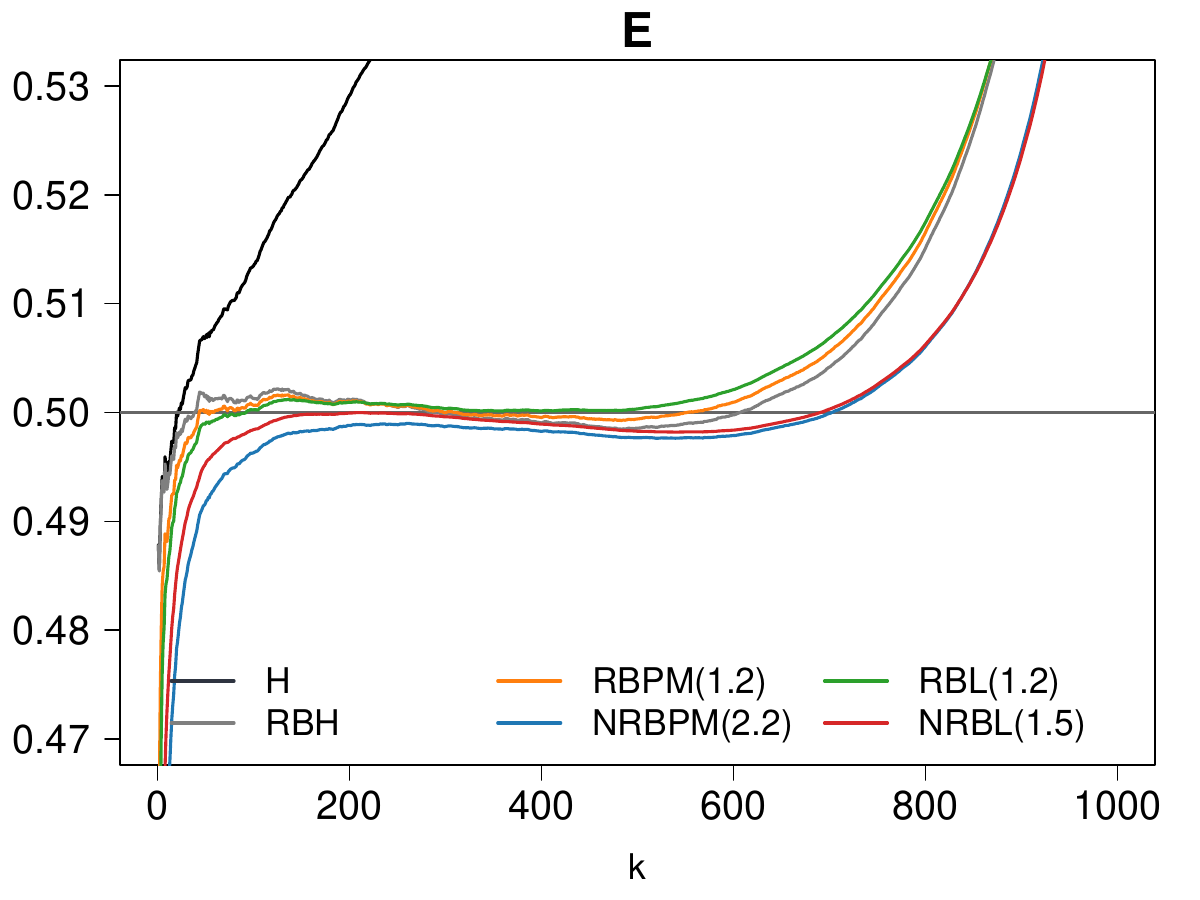}\:\:
\includegraphics[width=.475\textwidth]{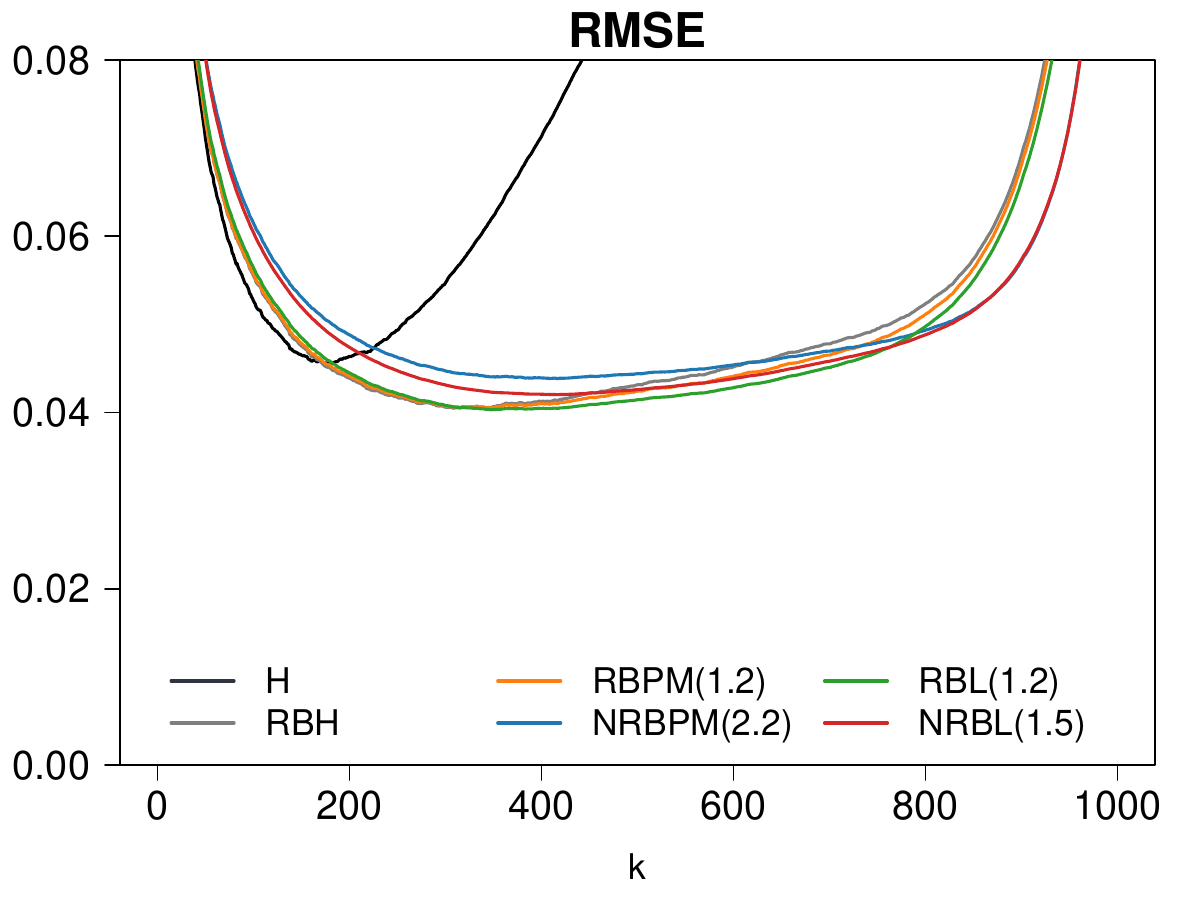}\\
\caption{%
Simulated mean values (left) and RMSE (right) of the EVI-estimators
under study for samples of size 1000 from the Fr\'echet model, with $\xi=0.5$
}
\label{fig_frechet}
\end{figure}

\begin{figure}[h!]
\includegraphics[width=.475\textwidth]{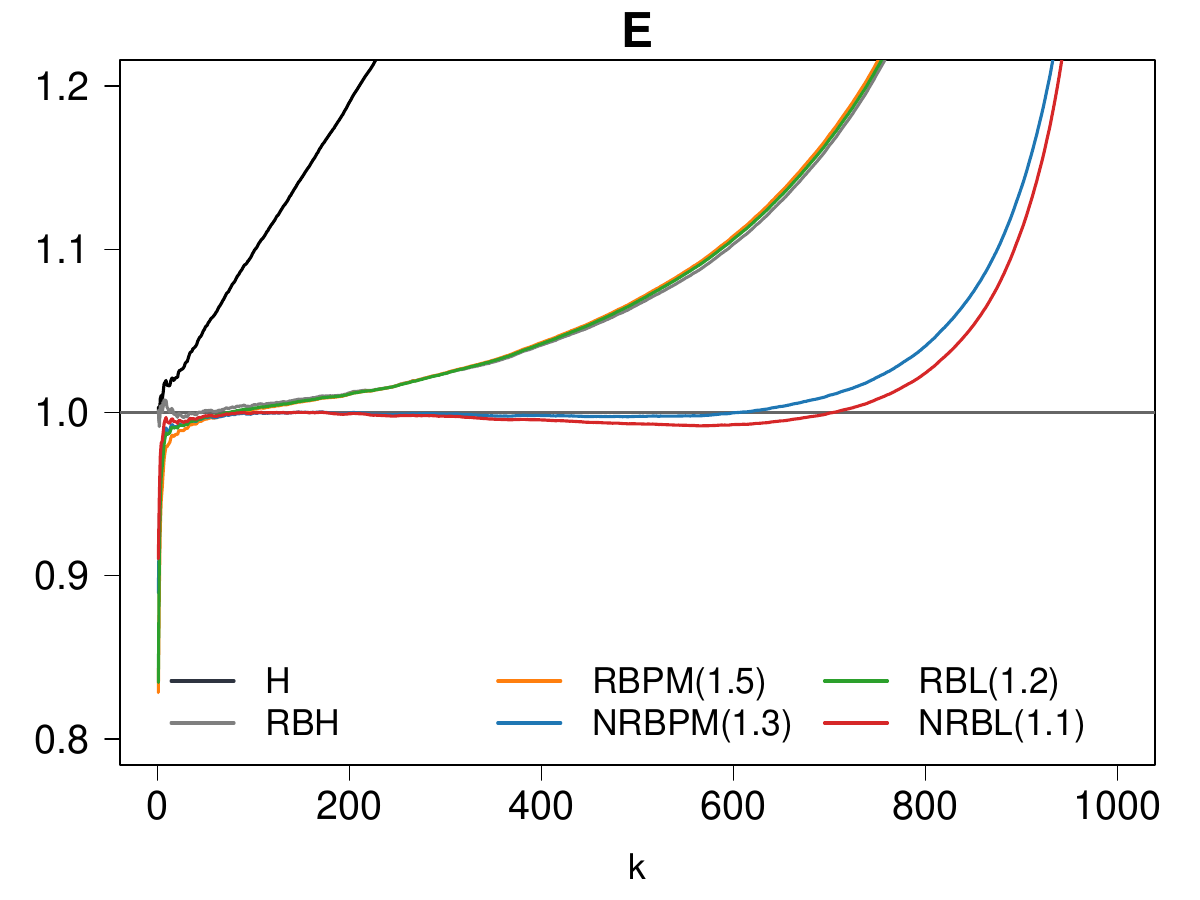}\:\:
\includegraphics[width=.475\textwidth]{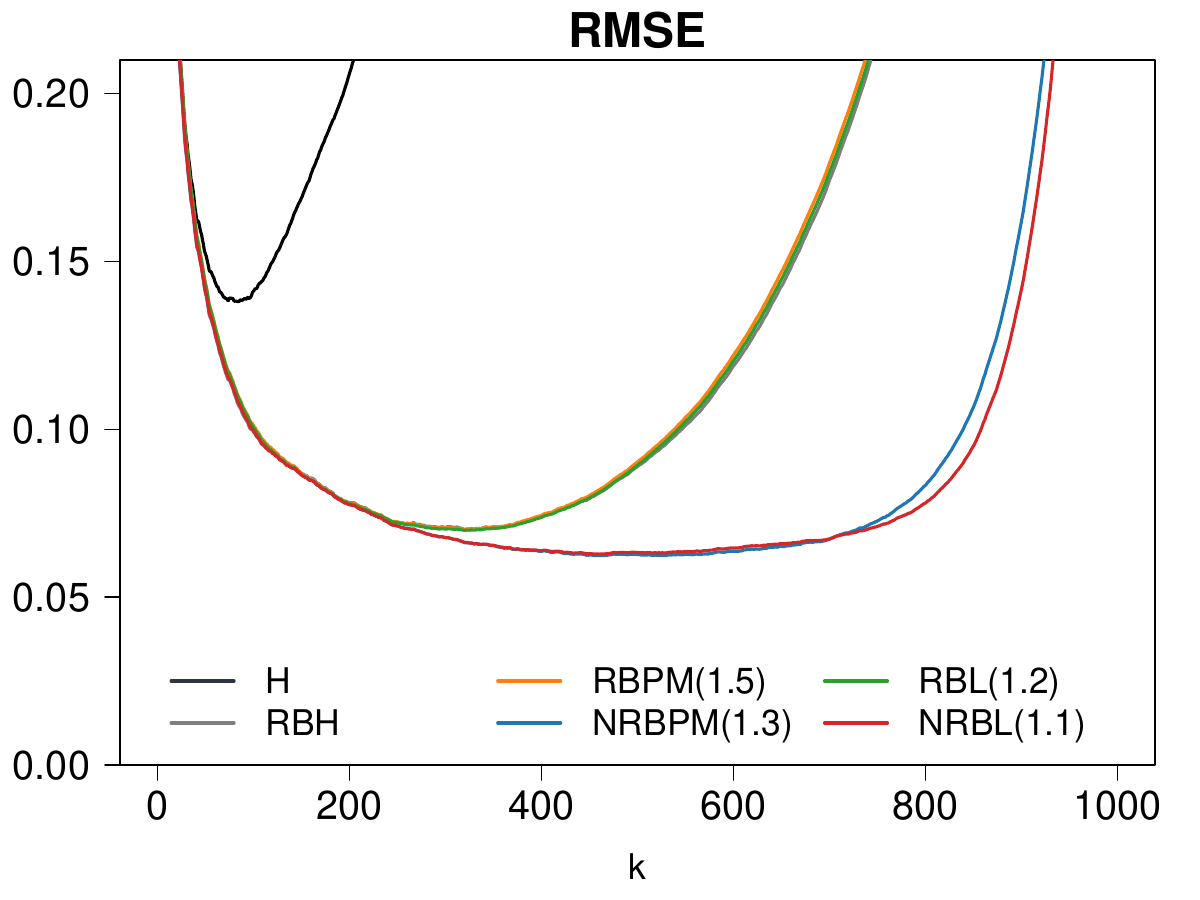}%
\caption{%
Simulated mean values (left) and RMSE (right) of the EVI-estimators
under study for samples of size 1000 from the Burr model, with $\xi=1$ and $\rho=-0.8$ 
}
\label{fig_burr1}
\end{figure}

\begin{figure}[h!]
\includegraphics[width=.475\textwidth]{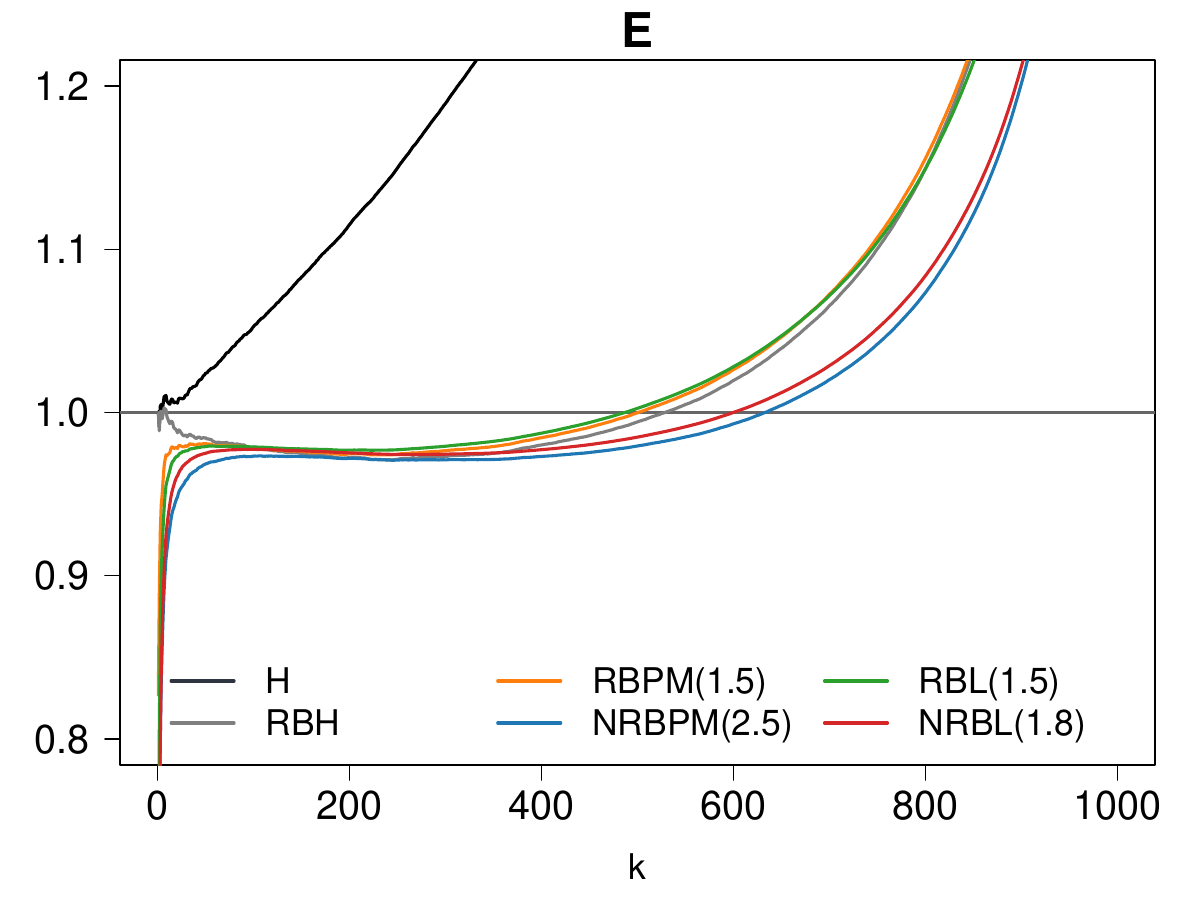}\:\:
\includegraphics[width=.475\textwidth]{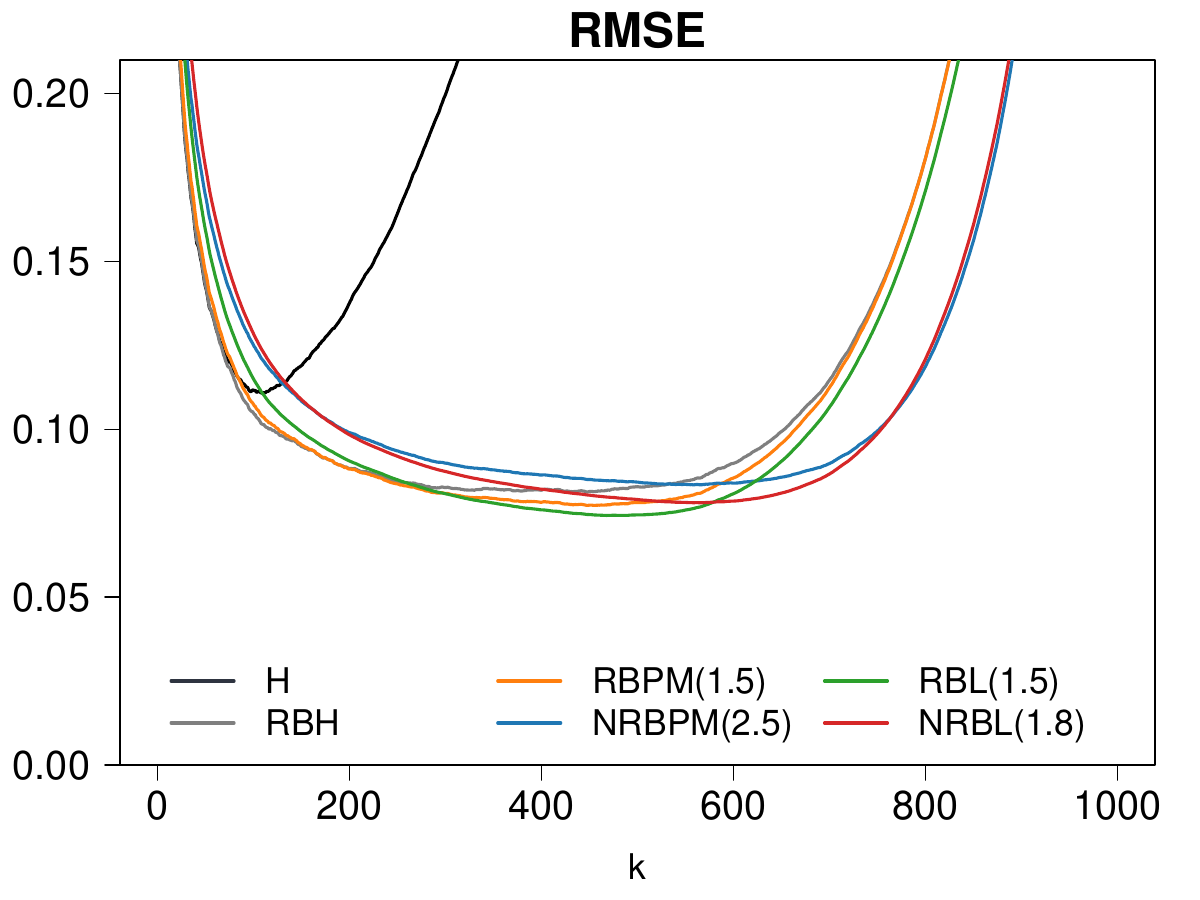}%
\caption{%
Simulated mean values (left) and RMSE (right) of the EVI-estimators
under study for samples of size 1000 from the GP model, with $\xi=1$
}
\label{fig_burr2}%
\end{figure}

\begin{figure}[h!]
\includegraphics[width=.475\textwidth]{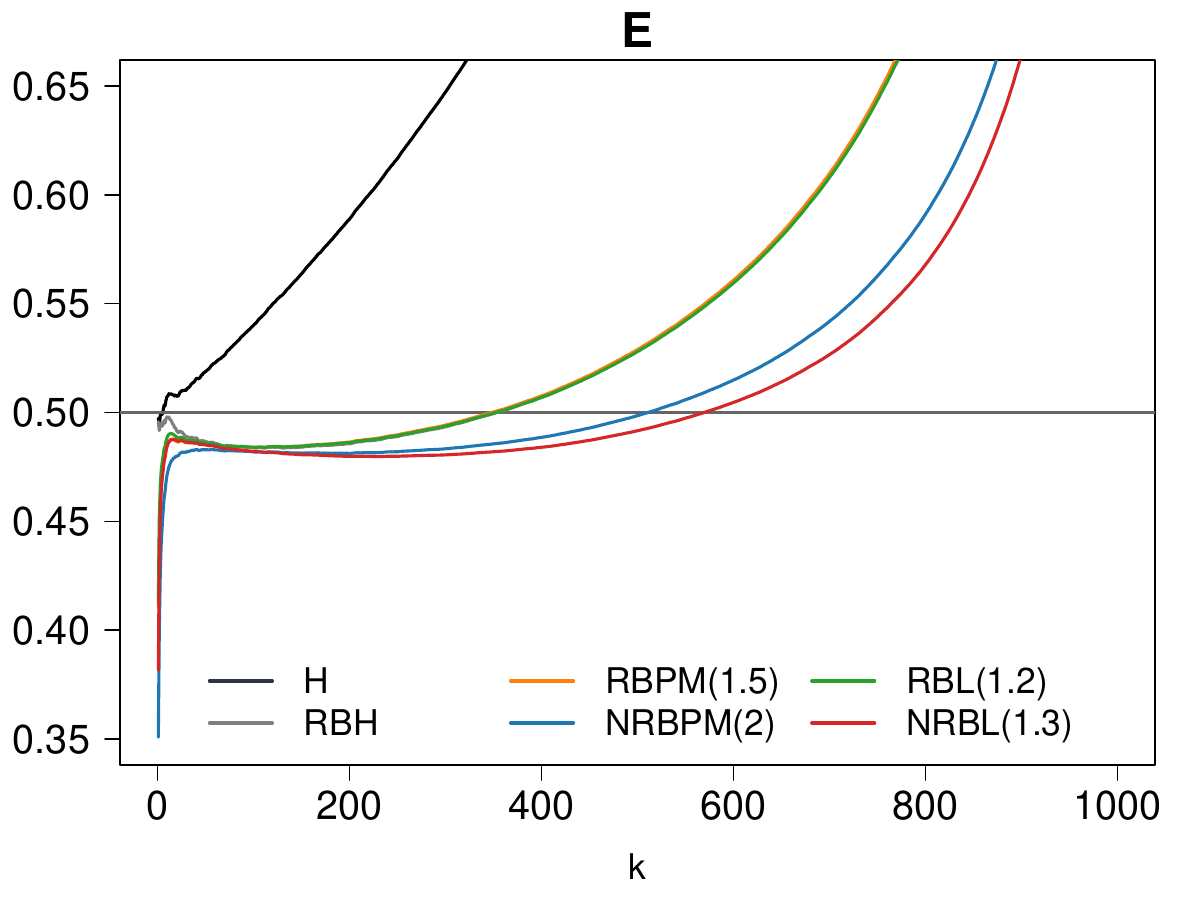}\:\:
\includegraphics[width=.475\textwidth]{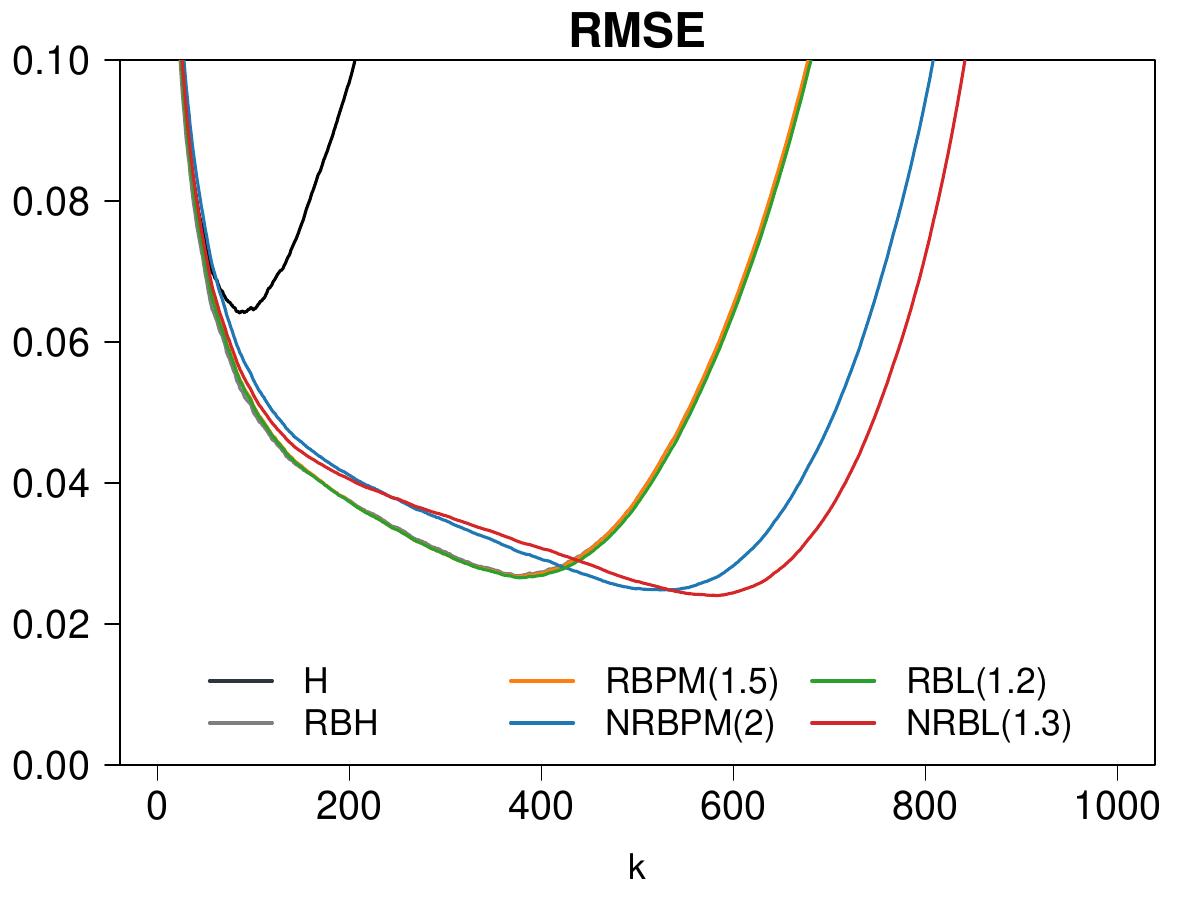}%
\caption{%
Simulated mean values (left) and RMSE (right) of the EVI-estimators
under study for samples of size 1000 from the  Half-$t_2$ model ($\xi=0.5$ and $\rho=-1$)
}
\label{fig_halft2}%
\end{figure}

\subsection{Simulated results at the simulated optimal threshold}%
Tables \ref{tab_frechet}--\ref{tab_ht2} present performance metrics evaluated at the simulated optimal threshold, defined in \eqref{eq:k0}. For each model and sample size, the best achieved values are highlighted in bold.
An analysis of the tables led to the following conclusions:
\begin{itemize}
    \item Across all four models, the Hill and RBH estimators are outperformed by  every general RB/NRB class considered, in both bias and RMSE.
    \item The differences in RMSE between the best-performing RB/NRB class of  estimators and the remaining RB/NRB classes narrow considerably as 
    the sample size $n$ increases, indicating that the choice of tuning parameter 
    $p$  matters most for small-to-moderate sample sizes.
    \item For the Fréchet and GP models, the linear RB estimators, particularly RBL(p)  with $p$ in the [1.2,1.5] range,  most often attain the lowest RMSE. 
    \item For the Burr and Half-$t_2$ models, the pattern reverses. The nonlinear classes, NRBPM and NRBL, tend to achieve the lowest RMSE, particularly from moderate to large sample sizes.
\end{itemize}

\begin{table}[h!]
\centering
\caption{
Simulated mean value and RMSE for the EVI-estimators under study,\ computed at the simulated optimal level, from samples of the Fréchet model with $\xi=0.5$ \label{tab_frechet}}%
\begin{scriptsize}
\begin{tabular}{lrrrrrrr}
  \hline
Estimator & $n=50$ & $n=100$ & $n=250$ & $n=500$ & $n=1000$ & $n=2500$ & $n=5000$ \\ 
  \hline
  \multicolumn{8}{c}{$\mathrm{E}(\widehat{\xi}^{\bullet}(p,\hat k_0^\bullet))$}\\
  H & 0.5734 & 0.5558 & 0.5403 & 0.5331 & 0.5247 & 0.5190 & 0.5130 \\ 
  RBH & 0.4697 & 0.4749 & 0.4894 & 0.4955 & 0.4998 & 0.5015 & 0.5016 \\ 
  RBPM(0.8) & 0.4684 & 0.4754 & 0.4881 & 0.4959 & 0.4995 & 0.5013 & 0.5014 \\ 
  RBPM(1.2) & 0.4743 & 0.4763 & 0.4905 & 0.4957 & 0.4998 & 0.5018 & 0.5017 \\ 
  RBPM(1.5) & 0.4837 & 0.4780 & 0.4924 & 0.4963 & \textbf{0.5001} & 0.5020 & 0.5015 \\ 
  NRBPM(1.6) & 0.4544 & 0.4721 & 0.4865 & 0.4943 & 0.4979 & 0.5007 & 0.5009 \\ 
  NRBPM(2.2) & 0.4591 & 0.4718 & 0.4868 & 0.4937 & 0.4982 & 0.5008 & \textbf{0.5004} \\ 
  NRBPM(2.5) & 0.4667 & 0.4710 & 0.4878 & 0.4938 & 0.4985 & 0.5009 & 0.5007 \\ 
  RBL(0.9) & 0.4687 & 0.4756 & 0.4880 & 0.4960 & 0.4994 & 0.5012 & 0.5014 \\ 
  RBL(1.2) & 0.4827 & 0.4777 & 0.4922 & 0.4964 & 0.5002 & 0.5021 & 0.5016 \\ 
  RBL(1.5) & \textbf{0.4986} & \textbf{0.4909} & 0.5021 & 0.4990 & 0.5017 & 0.5029 & 0.5025 \\ 
  NRBL(1.2) & 0.4544 & 0.4726 & 0.4861 & 0.4941 & 0.4982 & \textbf{0.5006} & 0.5008 \\ 
  NRBL(1.5) & 0.4599 & 0.4738 & 0.4877 & 0.4946 & 0.4988 & 0.5013 & 0.5010 \\ 
  NRBL(2) & 0.4905 & 0.4785 & \textbf{0.5010} & \textbf{0.4996} & 0.5037 & 0.5038 & 0.5033 \\ 
  \hline
  \multicolumn{8}{c}{$\mathrm{RMSE}(\widehat{\xi}^{\bullet}(p,\hat k_0^\bullet))$}\\  
  H & 0.1384 & 0.1055 & 0.0756 & 0.0594 & 0.0455 & 0.0336 & 0.0259 \\ 
  RBH & 0.1222 & 0.1035 & 0.0702 & 0.0554 & 0.0405 & 0.0274 & 0.0199 \\ 
  RBPM(0.8) & 0.1232 & 0.1048 & 0.0712 & 0.0559 & 0.0407 & 0.0275 & 0.0200 \\ 
  RBPM(1.2) & 0.1207 & 0.1024 & 0.0696 & 0.0551 & 0.0405 & 0.0274 & 0.0199 \\ 
  RBPM(1.5) & 0.1191 & 0.1016 & 0.0693 & 0.0551 & 0.0408 & 0.0276 & 0.0200 \\ 
  NRBPM(1.6) & 0.1372 & 0.1154 & 0.0762 & 0.0587 & 0.0424 & 0.0284 & 0.0205 \\ 
  NRBPM(2.2) & 0.1354 & 0.1157 & 0.0765 & 0.0596 & 0.0439 & 0.0295 & 0.0213 \\ 
  NRBPM(2.5) & 0.1331 & 0.1160 & 0.0769 & 0.0604 & 0.0448 & 0.0303 & 0.0218 \\ 
  RBL(0.9) & 0.1238 & 0.1055 & 0.0715 & 0.0562 & 0.0409 & 0.0277 & 0.0201 \\ 
  RBL(1.2) & 0.1183 & 0.1010 & 0.0688 & 0.0546 & \textbf{0.0403} & \textbf{0.0273} & \textbf{0.0198} \\ 
  RBL(1.5) & \textbf{0.1144} & \textbf{0.0984} & \textbf{0.0675} & \textbf{0.0543} & 0.0406 & 0.0276 & 0.0199 \\ 
  NRBL(1.2) & 0.1376 & 0.1153 & 0.0757 & 0.0582 & 0.0417 & 0.0279 & 0.0202 \\ 
  NRBL(1.5) & 0.1334 & 0.1140 & 0.0745 & 0.0579 & 0.0420 & 0.0282 & 0.0204 \\ 
  NRBL(2) & 0.1229 & 0.1112 & 0.0717 & 0.0576 & 0.0425 & 0.0290 & 0.0210 \\ 
   \hline
\end{tabular}
\end{scriptsize}
\end{table}

\begin{table}[h!]
\centering
\caption{
Simulated mean value and RMSE for the EVI-estimators under study,\ computed at the simulated optimal level, from samples of the Burr model with $\xi=1$ and $\rho=-0.8$ ($\delta=1$) \label{tab_burr}}%
\begin{scriptsize}
\begin{tabular}{lrrrrrrr}
  \hline
Estimator & $n=50$ & $n=100$ & $n=250$ & $n=500$ & $n=1000$ & $n=2500$ & $n=5000$ \\ 
  \hline
  \multicolumn{8}{c}{$\mathrm{E}(\widehat{\xi}^{\bullet}(p,\hat k_0^\bullet))$}\\
  H & 1.2259 & 1.1711 & 1.1178 & 1.0911 & 1.0848 & 1.0622 & 1.0516 \\ 
  RBH & 1.0825 & 1.0587 & 1.0419 & 1.0358 & 1.0267 & 1.0197 & 1.0176 \\ 
  RBPM(0.8) & 1.0883 & 1.0604 & 1.0407 & 1.0351 & 1.0261 & 1.0193 & 1.0171 \\ 
  RBPM(1.2) & 1.0769 & 1.0569 & 1.0416 & 1.0361 & 1.0271 & 1.0201 & 1.0175 \\ 
  RBPM(1.5) & 1.0900 & 1.0594 & 1.0409 & 1.0362 & 1.0273 & 1.0205 & 1.0168 \\ 
  NRBPM(0.8) & \textbf{1.0282} & \textbf{0.9981} & 0.9846 & 0.9789 & 0.9818 & 0.9901 & 0.9887 \\ 
  NRBPM(1.3) & 1.0528 & 1.0113 & 1.0024 & \textbf{0.9986} & \textbf{0.9977} & 1.0022 & 1.0009 \\ 
  NRBPM(1.5) & 1.0586 & 1.0173 & 1.0083 & 1.0062 & 1.0084 & 1.0097 & 1.0072 \\ 
  RBL(0.8) & 1.0984 & 1.0653 & 1.0422 & 1.0349 & 1.0260 & 1.0221 & 1.0167 \\ 
  RBL(1.2) & 1.0914 & 1.0597 & 1.0423 & 1.0365 & 1.0270 & 1.0203 & 1.0178 \\ 
  RBL(1.5) & 1.0913 & 1.0670 & 1.0485 & 1.0355 & 1.0284 & 1.0219 & 1.0182 \\ 
  NRBL(0.8) & 1.0323 & 0.9919 & 0.9818 & 0.9748 & 0.9806 & 0.9854 & 0.9864 \\ 
  NRBL(1.1) & 1.0496 & 1.0082 & \textbf{0.9994} & 0.9957 & 0.9938 & \textbf{0.9996} & \textbf{1.0002} \\ 
  NRBL(1.3) & 1.0626 & 1.0254 & 1.0178 & 1.0110 & 1.0107 & 1.0126 & 1.0105 \\
  \hline%
  \multicolumn{8}{c}{$\mathrm{RMSE}(\widehat{\xi}^{\bullet}(p,\hat k_0^\bullet))$}\\  
  H & 0.4122 & 0.3163 & 0.2247 & 0.1758 & 0.1380 & 0.1045 & 0.0841 \\ 
  RBH & 0.2941 & 0.2022 & 0.1363 & 0.0972 & 0.0702 & 0.0464 & 0.0362 \\ 
  RBPM(0.8) & 0.3021 & 0.2055 & 0.1384 & 0.0984 & 0.0708 & 0.0465 & 0.0361 \\ 
  RBPM(1.2) & 0.2881 & 0.2000 & 0.1349 & 0.0965 & 0.0699 & 0.0465 & 0.0364 \\ 
  RBPM(1.5) & 0.2814 & 0.1980 & 0.1340 & 0.0963 & 0.0701 & 0.0471 & 0.0369 \\ 
  NRBPM(0.8) & 0.2997 & 0.2057 & 0.1386 & 0.0977 & 0.0685 & 0.0426 & 0.0321 \\ 
  NRBPM(1.3) & 0.2824 & 0.1980 & 0.1313 & 0.0912 & 0.0624 & 0.0339 & \textbf{0.0248} \\ 
  NRBPM(1.5) & 0.2779 & 0.1963 & 0.1303 & 0.0905 & \textbf{0.0620} & 0.0357 & 0.0271 \\ 
  RBL(0.8) & 0.3206 & 0.2148 & 0.1444 & 0.1019 & 0.0731 & 0.0478 & 0.0369 \\ 
  RBL(1.2) & 0.2827 & 0.1978 & 0.1336 & 0.0959 & 0.0699 & 0.0468 & 0.0367 \\ 
  RBL(1.5) & \textbf{0.2731} & 0.1966 & 0.1336 & 0.0969 & 0.0714 & 0.0488 & 0.0382 \\ 
  NRBL(0.8) & 0.3154 & 0.2141 & 0.1456 & 0.1038 & 0.0727 & 0.0455 & 0.0349 \\ 
  NRBL(1.1) & 0.2842 & 0.1986 & 0.1318 & 0.0916 & 0.0628 & \textbf{0.0335} & 0.0249 \\ 
  NRBL(1.3) & 0.2750 & \textbf{0.1942} & \textbf{0.1290} & \textbf{0.0898} & 0.0621 & 0.0375 & 0.0289 \\
   \hline
\end{tabular}
\end{scriptsize}
\end{table}

\begin{table}[h!]
\centering
\caption{
Simulated mean value and RMSE for the EVI-estimators under study,\ computed at the simulated optimal level, from samples of the GP model with $\xi=1$ ($\rho=-1$) \label{tab_gp}}%
\begin{scriptsize}
\begin{tabular}{lrrrrrrr}
  \hline
Estimator & $n=50$ & $n=100$ & $n=250$ & $n=500$ & $n=1000$ & $n=2500$ & $n=5000$ \\ 
  \hline
  \multicolumn{8}{c}{$\mathrm{E}(\widehat{\xi}^{\bullet}(p,\hat k_0^\bullet))$}\\  
  H & 1.1551 & 1.1268 & 1.1064 & 1.0762 & 1.0581 & 1.0464 & 1.0346 \\ 
  RBH & 1.0486 & 1.0052 & 0.9887 & 0.9858 & 0.9855 & 1.0019 & 1.0021 \\ 
  RBPM(1.2) & 1.0476 & 1.0061 & 1.0061 & 0.9876 & 0.9878 & 0.9991 & 1.0008 \\ 
  RBPM(1.5) & 1.0453 & 1.0171 & 1.0081 & 0.9920 & 0.9915 & 1.0020 & 1.0026 \\ 
  RBPM(1.8) & 1.0419 & 1.0272 & 1.0092 & 0.9965 & 0.9956 & 1.0021 & 1.0012 \\ 
  NRBPM(2) & \textbf{0.9939} & 0.9908 & 0.9723 & 0.9686 & 0.9661 & 0.9988 & 0.9990 \\ 
  NRBPM(2.5) & 1.0243 & 1.0062 & 0.9939 & 0.9871 & 0.9862 & 1.0010 & 0.9992 \\ 
  NRBPM(2.8) & 1.0254 & 1.0100 & \textbf{1.0002} & 0.9954 & 0.9924 & 1.0019 & 1.0003 \\ 
  RBL(1.2) & 1.0452 & 1.0194 & 1.0068 & 0.9910 & 0.9903 & \textbf{1.0002} & 1.0020 \\ 
  RBL(1.5) & 1.0459 & 1.0285 & 1.0129 & \textbf{1.0028} & 0.9967 & 1.0031 & 1.0022 \\ 
  RBL(2) & 1.0651 & 1.0437 & 1.0286 & 1.0174 & 1.0108 & 1.0073 & 1.0046 \\ 
  NRBL(1.5) & 1.0143 & \textbf{0.9952} & 0.9772 & 0.9721 & 0.9698 & 0.9994 & \textbf{0.9997} \\ 
  NRBL(1.8) & 1.0327 & 1.0142 & 1.0017 & 0.9970 & 0.9935 & 1.0010 & 1.0010 \\ 
  NRBL(2) & 1.0440 & 1.0305 & 1.0145 & 1.0044 & \textbf{1.0002} & 1.0031 & 1.0027 \\
  \hline%
  \multicolumn{8}{c}{$\mathrm{RMSE}(\widehat{\xi}^{\bullet}(p,\hat k_0^\bullet))$}\\  
  H & 0.3525 & 0.2658 & 0.1860 & 0.1441 & 0.1107 & 0.0832 & 0.0655 \\ 
  RBH & 0.2503 & 0.1895 & 0.1376 & 0.1057 & 0.0813 & 0.0386 & 0.0262 \\ 
  RBPM(1.2) & 0.2462 & 0.1872 & 0.1359 & 0.1040 & 0.0793 & 0.0381 & 0.0261 \\ 
  RBPM(1.5) & 0.2420 & 0.1850 & 0.1336 & 0.1024 & 0.0773 & 0.0377 & 0.0264 \\ 
  RBPM(1.8) & 0.2394 & 0.1839 & 0.1328 & 0.1020 & 0.0763 & 0.0379 & 0.0268 \\ 
  NRBPM(2) & 0.2483 & 0.1960 & 0.1452 & 0.1130 & 0.0875 & 0.0417 & 0.0287 \\ 
  NRBPM(2.5) & 0.2434 & 0.1916 & 0.1427 & 0.1103 & 0.0835 & 0.0406 & 0.0284 \\ 
  NRBPM(2.8) & 0.2409 & 0.1907 & 0.1423 & 0.1098 & 0.0824 & 0.0411 & 0.0290 \\ 
  RBL(1.2) & 0.2421 & 0.1844 & 0.1333 & 0.1021 & 0.0773 & \textbf{0.0376} & \textbf{0.0261} \\ 
  RBL(1.5) & 0.2376 & \textbf{0.1812} & \textbf{0.1304} & \textbf{0.0996} & \textbf{0.0743} & 0.0377 & 0.0268 \\ 
  RBL(2) & \textbf{0.2366} & 0.1830 & 0.1324 & 0.1002 & 0.0743 & 0.0408 & 0.0298 \\ 
  NRBL(1.5) & 0.2464 & 0.1920 & 0.1413 & 0.1100 & 0.0848 & 0.0406 & 0.0279 \\ 
  NRBL(1.8) & 0.2401 & 0.1865 & 0.1363 & 0.1043 & 0.0781 & 0.0393 & 0.0277 \\ 
  NRBL(2) & 0.2385 & 0.1858 & 0.1355 & 0.1030 & 0.0766 & 0.0400 & 0.0287 \\ 
   \hline
\end{tabular}
\end{scriptsize}
\end{table}

\begin{table}[h!]
\centering
\caption{
Simulated mean value and RMSE for the EVI-estimators under study,\ computed at the simulated optimal level, from samples of the Half-$t_2$ model ($\xi=0.5$ and $\rho=-1$) \label{tab_ht2}}%
\begin{scriptsize}
\begin{tabular}{lrrrrrrr}
  \hline
Estimator & $n=50$ & $n=100$ & $n=250$ & $n=500$ & $n=1000$ & $n=2500$ & $n=5000$ \\ 
  \hline
  \multicolumn{8}{c}{$\mathrm{E}(\widehat{\xi}^{\bullet}(p,\hat k_0^\bullet))$}\\  
  H & 0.6003 & 0.5647 & 0.5549 & 0.5469 & 0.5340 & 0.5286 & 0.5228 \\ 
  RBH & 0.5327 & 0.5220 & 0.5104 & 0.5039 & 0.5032 & 0.5017 & 0.5011 \\ 
  RBPM(1.2) & 0.5302 & 0.5211 & 0.5102 & 0.5040 & 0.5035 & 0.5017 & 0.5015 \\ 
  RBPM(1.5) & 0.5264 & 0.5197 & 0.5134 & 0.5061 & 0.5039 & 0.5022 & 0.5019 \\ 
  RBPM(1.8) & 0.5335 & 0.5181 & 0.5136 & 0.5076 & 0.5041 & 0.5026 & 0.5019 \\ 
  NRBPM(1.5) & 0.5137 & 0.4984 & 0.4935 & 0.4974 & 0.4983 & 0.5008 & 0.4997 \\ 
  NRBPM(2) & 0.5158 & 0.5062 & 0.5028 & 0.5018 & 0.5032 & 0.5008 & 0.5004 \\ 
  NRBPM(2.5) & 0.5144 & 0.5099 & 0.5080 & 0.5079 & 0.5038 & 0.5024 & 0.5009 \\ 
  RBL(0.8) & 0.5408 & 0.5258 & 0.5116 & 0.5041 & 0.5023 & 0.5013 & 0.5006 \\ 
  RBL(1.2) & 0.5270 & 0.5193 & 0.5128 & 0.5055 & 0.5037 & 0.5016 & 0.5014 \\ 
  RBL(1.5) & 0.5394 & 0.5223 & 0.5128 & 0.5080 & 0.5042 & 0.5033 & 0.5017 \\ 
  NRBL(1.1) & \textbf{0.5102} & 0.4911 & 0.4856 & 0.4917 & \textbf{0.4988} & \textbf{0.5000} & 0.5004 \\ 
  NRBL(1.3) & 0.5145 & \textbf{0.5008} & \textbf{0.5013} & \textbf{0.4996} & 0.5018 & 0.5008 & \textbf{0.5001} \\ 
  NRBL(1.8) & 0.5248 & 0.5157 & 0.5100 & 0.5065 & 0.5041 & 0.5025 & 0.5012 \\ 
  \hline%
  \multicolumn{8}{c}{$\mathrm{RMSE}(\widehat{\xi}^{\bullet}(p,\hat k_0^\bullet))$}\\  
  H & 0.2021 & 0.1514 & 0.1069 & 0.0816 & 0.0641 & 0.0479 & 0.0379 \\ 
  RBH & 0.1384 & 0.0901 & 0.0565 & 0.0399 & 0.0268 & 0.0173 & 0.0125 \\ 
  RBPM(1.2) & 0.1354 & 0.0887 & 0.0558 & 0.0395 & 0.0266 & 0.0172 & 0.0125 \\ 
  RBPM(1.5) & 0.1325 & 0.0875 & 0.0554 & 0.0394 & 0.0267 & 0.0174 & 0.0127 \\ 
  RBPM(1.8) & 0.1303 & 0.0871 & 0.0555 & 0.0396 & 0.0270 & 0.0178 & 0.0130 \\ 
  NRBPM(1.5) & 0.1302 & 0.0848 & 0.0529 & 0.0374 & 0.0243 & 0.0150 & \textbf{0.0111} \\ 
  NRBPM(2) & 0.1268 & 0.0839 & 0.0523 & 0.0370 & 0.0248 & 0.0158 & 0.0116 \\ 
  NRBPM(2.5) & \textbf{0.1252} & 0.0847 & 0.0539 & 0.0384 & 0.0261 & 0.0170 & 0.0125 \\ 
  RBL(0.8) & 0.1512 & 0.0973 & 0.0607 & 0.0425 & 0.0286 & 0.0183 & 0.0132 \\ 
  RBL(1.2) & 0.1329 & 0.0874 & 0.0551 & 0.0392 & 0.0265 & 0.0172 & 0.0125 \\ 
  RBL(1.5) & 0.1286 & 0.0864 & 0.0552 & 0.0395 & 0.0272 & 0.0179 & 0.0130 \\ 
  NRBL(1.1) & 0.1333 & 0.0867 & 0.0552 & 0.0398 & 0.0254 & \textbf{0.0150} & 0.0112 \\ 
  NRBL(1.3) & 0.1286 & \textbf{0.0837} & \textbf{0.0521} & \textbf{0.0368} & \textbf{0.0240} & 0.0150 & 0.0111 \\ 
  NRBL(1.8) & 0.1254 & 0.0843 & 0.0536 & 0.0382 & 0.0262 & 0.0172 & 0.0125 \\ 
   \hline
\end{tabular}
\end{scriptsize}
\end{table}

\section{Hydrological case study}\label{sec:apps}

This section illustrates the performance of the general classes of RB EVI-estimators through a hydrological case study. The data set consists of daily river flow measurements (in m$^3$/s) collected at a monitoring station of the Whiteadder river, UK, over a multi-year period. The data, covering the period from 1990 to 2005, was collected from \url{https://nrfa.ceh.ac.uk/data/station/meanflow/21022} and contains $n=5844$ observations. Preliminary exploratory analysis, including time series plots and boxplots, presented in Figure \ref{fig:river_des} indicates the presence of extreme events and strong right-skewness (see Table \ref{tab:rio}) and heavy-tailed behaviour, making it suitable for {the type of extreme value analysis in this article}.

\begin{table}[h!]
	\caption{Summary statistics of the Whiteadder river flow}
    \centering
	\begin{tabular}{lllllll}
		$n$ & min & max & median & mean & skewness &  kurtosis \\ \hline
		5844 &  0.707 &  219.700& 3.760 & 6.573 & 7.999 & 110.835
	\end{tabular} \label{tab:rio}
\end{table}

\begin{figure}[h!]
    \begin{tabular}{ll}
    \includegraphics[width=\textwidth]{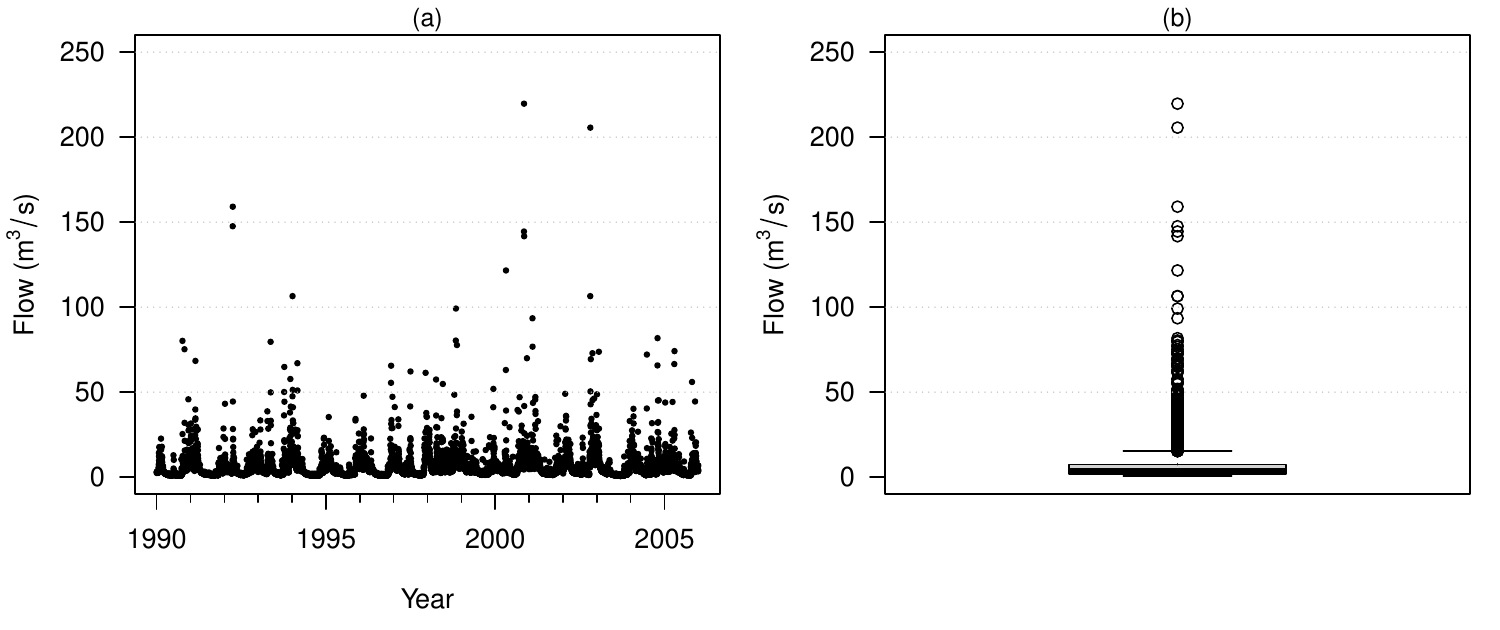}
\vspace{-0.35cm}
    \end{tabular}
	\caption{Daily time series (a) and boxplot (b) of the Whiteadder river flow between 1990 and 2005}
	\label{fig:river_des}
\end{figure}

The EVI is estimated using the classical Hill estimator and the general classes of power mean and $L_p$ estimators, together with their corresponding RB and nonlinear RB versions. Figure~\ref{fig:riverpaths} displays the sample paths of selected estimators as a function of the number of upper order statistics $k$. The classical Hill estimator exhibits considerable variability and a pronounced sensitivity to the choice of $k$. In contrast, RB estimators show extended stability regions, allowing for a more reliable identification of appropriate $k$ values. In particular, a suitable choice of RB power mean and $L_p$ estimators produces smoother trajectories and reduced variability across a wide range of thresholds. 

\begin{figure}[h!]
\center{\includegraphics[width=0.75\textwidth]{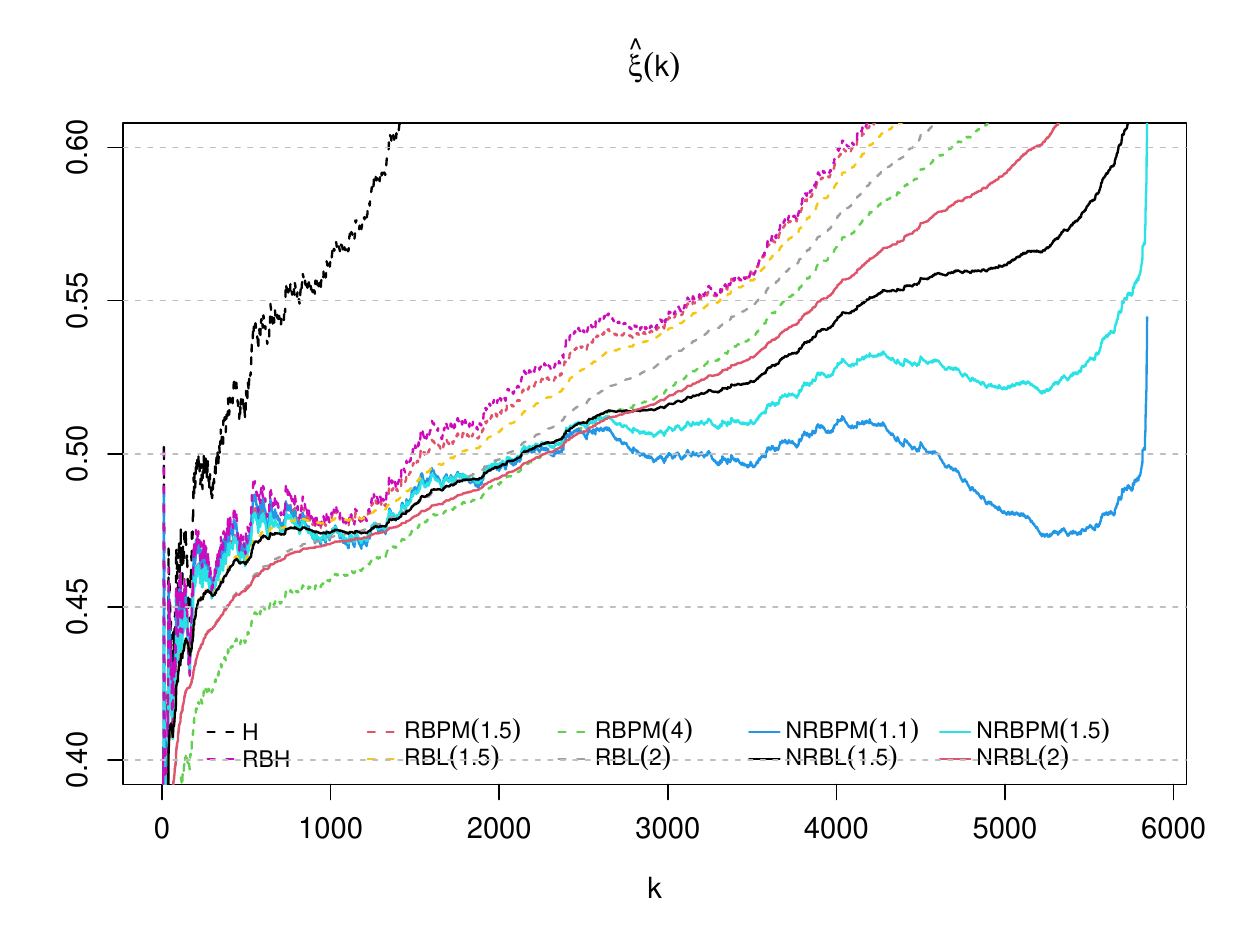}} \vspace{-0.5cm}
 \caption{Sample paths of selected EVI-estimators for the river flow data set}
\label{fig:riverpaths}
\end{figure}

Table \ref{tab:river_estimates} presents the EVI estimates obtained using the sample-path stability threshold selection procedure described by \citet{caeiro2016threshold}. The procedure selects the number \(k\) of upper order statistics by identifying a stable region in the sequence of EVI estimates. First, the minimum number of decimal places required to distinguish the estimates is determined, and the longest run of equal consecutive rounded estimates is identified. Within this region, the estimates are then considered with two additional decimal places and their modal value is computed. The selected value \(k^\ast\) is the largest \(k\) associated with this modal value, and the corresponding estimate \(\widehat{\xi}(k^\ast)\) is retained as the final EVI estimate. The estimated standard errors, $\widehat{\rm se}$, depend on the class of EVI-estimators under study, \{H, RBH, RBPM, NRBPM, RBL, NRBL\}, and on the value of $p$. They are obtained from the corresponding asymptotic variance expressions in \eqref{eq:var_H}, \eqref{eq:var_PM} and \eqref{eq:var_L}, and computed as 
\[\widehat{\rm se} := \hat \xi (k^\ast) \frac{\varphi(p)}{\sqrt{k^\ast}},\] 
where
\begin{equation*}
   \varphi(p):= \left\{
   \begin{array}{cc}
   1, &  p=1 \wedge \{\rm{H, RBH, RBPM, NRBPM, RBL, NRBL}\}\\
\frac{1}{p} \sqrt{\frac{\Gamma(2p+1)}{\Gamma^2(p+1)}-1}, & p>-\dfrac12,\,\, p\neq \{0, 1\} \wedge \{\rm{RBPM, NRBPM}\}\\
\sqrt{\frac{\Gamma(2p-1}{\Gamma^2(p)}}, & p >\dfrac12, \,\, p\neq 1 \wedge \{\rm{RBL, NRBL}\}.
\end{array}
\right.
\end{equation*}

\begin{table}[h!]
\centering
\caption{
Optimal threshold, \(k^\ast\), EVI estimates and associated standard errors, obtained using the sample-path stability threshold selection procedure proposed by {\citet{caeiro2016threshold}} \label{tab:river_estimates}}%
\begin{tabular}{p{3cm} p{1cm} p{1.25cm} p{1.25cm}}
\hline\noalign{\vskip 1pt}
Estimator & \(k^\ast\) & $\hat \xi (k^\ast)$ & $\widehat{\rm se}$\\[1pt]
\hline
H & 942 & 0.5545 & 0.0181 \\
RBH & 3028 & 0.5450 & 0.0099 \\
RBPM(1.5) & 3043 & 0.5446 & {0.0102} \\
RBPM(4) & 2792 & 0.5150 & {0.0202}\\ 
RBL(1.5) & 3140 & 0.5450 & {0.0110}\\  
RBL(2) & 2793 & 0.5250 & {0.0140}\\  
NRBPM(1.1) & 1329 & 0.4750 & {0.0130
}\\ 
NRBPM(1.5) & 3589 & 0.5148 & {0.0089}\\
NRBL(1.5) & 5101 & 0.5650 & {0.0089}\\ 
NRBL(2) & 2849 & 0.5149 & {0.0136}\\ 
\hline 
\end{tabular}
\end{table} 


The stability based threshold selection procedure leads to different values of $k^\ast$ across estimators, reflecting their distinct bias/variance behaviour. The RB and nonlinear RB estimators generally exhibit lower estimated standard errors than the Hill estimator. 
In particular, NRBPM(1.1) combines a moderate selected threshold with a stable sample path.

\section{Conclusions and future work}\label{sec:conclusions}

In this paper, we have presented a comparative study of general classes of linear and nonlinear RB EVI-estimators, encompassing power mean-type and $L_p$-type methodologies within a unified third-order framework. The theoretical and simulation results show that the relative performance of linear and nonlinear RB estimators depends on the underlying tail and second-order structure. The proposed NRBPM and NRBL estimators perform particularly well for the Burr and Half-$t_2$ models, whereas linear RB estimators remain highly competitive for the Fr\'echet and GP models. Differences among the best performing procedures decrease as the sample size increases, suggesting that the influence of the tuning parameter $p$ is more pronounced in small and moderate samples and diminishes for larger sample sizes. The empirical application further supports the practical relevance of the nonlinear RB classes and confirms that different choices of $p$ may lead to  different sample path behaviour. These findings motivate future research on data-driven selection of $p$, jointly with an asymptotically valid choice of the intermediate sequence $k=k_n$, possibly through criteria combining sample-path stability with bias and variance considerations.

\section*{Acknowledgements}
\noindent Research partially supported by national funds through FCT – Fundação para a Ciência e a Tecnologia, I.P., under CEAUL Research Unit, UID/00006/2025 (\url{https://doi.org/10.54499/UID/00006/2025}), and by the European Union---NextGenerationEU through the project, UID/PRR/00006/2025 (\url{https://doi.org/10.54499/UID/PRR/00006/2025}), 
Center for Mathematics and Applications / NOVA Math research unit, 
UID/00297/2025 (https://doi.org/10.54499/ UID/00297/2025) and UID/PRR/00297/2025 (\url{https://doi.org/10.54499/UID/PRR/00297/2025)}.

\bibliographystyle{elsarticle-harv}
\bibliography{referenceRB}

\end{document}